\documentclass{article}
\usepackage{graphicx} 
\usepackage[utf8]{inputenc} 
\usepackage[T1]{fontenc}
\usepackage{url}
\usepackage{ifthen}
\usepackage{cite}
\usepackage[cmex10]{amsmath} 
\usepackage{color}
\usepackage[color=green!30]{todonotes}
\usepackage[letterpaper,margin=1in]{geometry}

\usepackage[ruled,vlined,linesnumbered]{algorithm2e}
\usepackage{authblk}

\usepackage{amsfonts} 
\usepackage{amssymb} 
\usepackage{mathtools}
\usepackage{amsfonts}
\usepackage{amsmath, amsthm, amssymb}
\usepackage{mathdots}
\usepackage{float}
\usepackage{booktabs}
\usepackage{caption}
\usepackage{subcaption}
\usepackage[normalem]{ulem}

\newtheorem{definition}{Definition}
\newtheorem{theorem}{Theorem}

\newtheorem{proposition}{Proposition}
\newtheorem{lemma}{Lemma}

\usepackage{subcaption}
\usepackage{tikz,pgfplots,pgf}
\usepackage{circuitikz}
\usetikzlibrary{arrows,decorations.pathreplacing}

\title{Towards Scalable Quaternary Message-Passing Decoding for Quantum Error Correction}

\author[1,2]{Boqing Zhang}
\author[2,3]{Henry D. Pfister}
\author[2]{Hanwen Yao}
\author[1]{Siyuan Niu}

\affil[1]{Department of Electrical and Computer Engineering, University of Central Florida, Orlando, FL 32816, USA}
\affil[2]{Department of Electrical and Computer Engineering, Duke University, Durham, NC 27708, USA}
\affil[3]{Department of Mathematics, Duke University, Durham, NC 27708, USA}

\date{}
\begin{document}
\maketitle
\vspace{-2em}
\begin{abstract}
\noindent
The scalability and interpretability of message-passing (MP) decoding, such as (quaternary) Belief Propagation, remain open challenges in quantum error correction. Even for surface codes, arguably the first testbed for decoding methods, studies of improved MP decoders have mostly been restricted to small distances ($d \lesssim 19$). Moreover, the mismatch with established message-passing theory limits the decoder's interpretability, making it unclear whether MP decoding can sustain its effectiveness at large system sizes.  This work takes a step toward a more principled and interpretable MP decoding framework, with the goal of making MP-based decoding more reliable and bridging theory and practice. We introduce a dilution method, which allows a quaternary Min-Sum (MS) decoder to exhibit an apparent depolarizing threshold of $16\%$ up to distance $20$, outperforming Minimum-Weight Perfect Matching in finite-length regimes. Notably, for $X$-noise, the standard MS decoder under dilution has worst-case complexity $O(N \log^2 d)$ and outperforms BP-OSD at $d=65$. The observed $\sim 9\%$ threshold may correspond to a true asymptotic threshold. 
Finally, we give a graph-dilution argument that interprets the success of the dilution method and offers insight into when MP algorithms can genuinely scale. Taken together, these results provide encouraging progress toward scalable and interpretable MP decoding in quantum error correction.
 \end{abstract}

\section{Introduction}
Message-passing (MP) algorithms such as Belief Propagation (BP) have achieved remarkable success as scalable decoders for classical low-density parity-check (LDPC) codes. When extended to quantum error correction, quaternary message passing offers additional advantages over standard decoding methods such as Minimum-Weight Perfect Matching (MWPM), as it operates directly on the joint $X$-$Z$ Tanner graph and thus has the natural potential to capture the intrinsic correlations between $X$ and $Z$ Pauli errors~\cite{poulin2008iterative}. In the fault-tolerant quantum computing era, MP decoders are also highly appealing from a hardware perspective: they are local, parallelizable, and well-suited for low-latency, on-chip implementations~\cite{muller2025improved}. 

Despite these appealing features, standard MP decoders\footnote{Here, we use the term message-passing decoder to denote decoding schemes whose computational dynamics are dominated by local iterative message updates on the decoding graph. Any auxiliary non-message-passing steps are assumed to incur only near constant ($O(1)$) overhead with respect to system size and do not affect the overall complexity scaling.}
 face severe non-convergence issues when directly applied to quantum stabilizer codes~\cite{poulin2008iterative, babar2015fifteen, raveendran2021trapping,fuentes2021degeneracy}. In the classical setting, MP algorithms like BP are exact on trees, and their asymptotic behavior on families of random classical LDPC codes, whose Tanner graphs are locally tree-like, can be rigorously characterized through information-theoretic and statistical-physics analyses~\cite{richardson2008modern, mezard2009information}. However, most quantum LDPC codes~\cite{tillich2013quantum, panteleev2021quantum, breuckmann2021balanced}, including topological codes, must embed sufficient global structure to enforce stabilizer commutativity. Hence, their Tanner graphs are highly degenerate and inevitably contain abundant short loops~\cite{poulin2008iterative}, placing them far outside this favorable tree-like regime.

For quantum codes, the behavior of BP and many variants proposed to improve it, is difficult to explain and predict from existing theory. This includes Relay-BP~\cite{muller2025improved}, BP-guided decimation~\cite{yao2024belief}, Memory BP (MBP)~\cite{kuo2022exploiting}, Generalized BP (GBP)~\cite{old2023generalized}, and many others~\cite{poulin2008iterative, wang2012enhanced, du2022stabilizer}.
Moreover, although many of the above decoders achieve good empirical performance on specific codes, there is still very little convincing numerical evidence that these MP decoders with realistic complexity constraints can sustain reliable decoding as the system size increases, with surface codes beyond distance $20$ often appearing as a canonical barrier. This stands in contrast to other non-MP decoders such as MWPM~\cite{dennis2002topological} and Union-Find~\cite{delfosse2021almost}, whose behavior is more interpretable and whose performance can scale robustly with system size~\cite{fowler2012proof, yoshida2026proof}.
For this reason, MP algorithms have been difficult to fully trust as standalone decoders, particularly in critical settings where algorithmic interpretability and conceptual clarity are essential. Furthermore, even when combined with post-processors such as Ordered Statistics Decoding (OSD)~\cite{panteleev2021degenerate} or Local Statistics Decoding (LSD)~\cite{hillmann2024localized}, evidence suggests that BP-OSD decoding performance still degrades with increasing code size~\cite{higgott2023improved}.
As the BP stage becomes increasingly fragile on larger decoding graphs, decoders with global post-processing may still face similar scalability limitations. Together, these motivate the central question of this work:

\vspace{1.5mm}
\noindent
\emph{In quantum error correction, can MP decoders achieve an asymptotic threshold with controlled iteration growth\footnote{%
By \emph{controlled iteration growth}, we mean that the maximum message-passing iteration count $I(N)$ does not grow so rapidly with $N$ as to compromise the near-linear complexity $O\big(N \cdot I(N)\big)$. Classical LDPC theory predicts $I(N)=O(\log N)$ below threshold, though such scaling may be overly optimistic for general quantum LDPC codes.} (scalability)?
If so, can their effectiveness be explained by established MP theory  (interpretability)?}
\vspace{1.5mm}

Here, we emphasize the asymptotic scaling not because practical systems require extremely large codes, but to ensure that the underlying MP mechanism is reliable and can, in principle, be sustained on large-scale decoding graphs. This question has been framed pessimistically because the Tanner graphs of quantum LDPC codes appear incompatible with the assumptions underlying established MP theory in statistical physics and spin glasses~\cite {mezard2009information, zdeborova2016statistical}. However, in this work, we show that there is room for optimism. Many quantum LDPC codes, including surface codes, can be viewed as arising from two classical tree-like codes coupled through a deterministic global operation, most notably via the topological product construction~\cite{tillich2013quantum}. 
From this viewpoint, the challenging features of their Tanner graphs can be seen as combinatorially induced by the deterministic product construction.
Motivated by this perspective, we aim to incorporate the underlying product structure into the MP decoding process. 
On the simplest surface-code model, we propose a \emph{dilution method} that first constructs a sequence of progressively diluted Tanner graphs. Each graph is structurally sparsified in a way that respects the product structure of the code and the symmetry between \(X\)- and \(Z\)-type stabilizer generators. The resulting decoder then runs a standard quaternary Min-Sum (MS) algorithm over this sequence of progressively diluted Tanner graphs. At each stage, message-passing dynamics are interleaved with collective decimation steps that freeze the qubit variables removed in the next graph. 
Numerically, this \emph{MS decoder under dilution}, with worst-case complexity $O(N \cdot \log^2 d)$, achieves a depolarizing threshold of approximately $16\%$ up to distance $20$, outperforming both MWPM and BP-OSD in this finite-length regime. As with many improved BP-based approaches, however, this threshold and finite-length advantage cannot persist for much larger systems without further modification. In contrast, for pure $X$-noise, simulations up to distance $d=65$ show no sign of performance degradation. Strong numerical evidence therefore suggests that the observed $\sim 9\%$ threshold may reflect an asymptotic threshold. This provides evidence that an MP decoding algorithm alone could sustain effective performance across the entire family of surface codes while retaining its complexity advantage.

This work also provides a framework that interprets MP decoding behavior through graph dilution. Our interpretation is built on two main conceptual components. First, we re-examine the conventional view that inevitable short cycles constitute a fundamental limitation of (quaternary) message passing, showing that this view might be incomplete without accounting for the correlation structure of the underlying error distribution and the strength of $X$-$Z$ coupling. Second, degeneracy in quantum LDPC codes renders the decoding Tanner graph locally redundant and, especially for surface codes, gives rise to a natural multiscale structure.
Combining the two perspectives, the dilution method can be interpreted as a carefully designed "renormalization step" for message passing. It coarse-grains the decoding graph, effectively zooming out to larger (sparser) scales, and induces a renormalization of the effective error through collective decimation. As a result, each diluted graph targets the renormalized error clusters at its characteristic scale, progressively correcting them all from small to large sizes. In Section~\ref{sec:diluted} and~\ref{sec:analysis}, we provide both a rigorous combinatorial analysis of the diluted graphs and a complementary heuristic interpretation of the resulting dynamics on a 1D toy model. Together, these results provide both analytical and conceptual justification for the effectiveness of the dilution method.  Overall, using the
surface code as a testbed, the main contribution of this work is to show that by exploiting the structure of the code and its decoding graph, graph dilution yields a fully deterministic message-passing decoder with \(O(N\log^2 d)\) worst-case complexity. The resulting MP decoder provides strong numerical evidence for a true asymptotic threshold and admits a conceptual interpretation.


\section{Background}
\label{sec:background}
The goal of this section is to establish a graphical model for quaternary message-passing decoding of Calderbank-Shor-Steane (CSS) codes~\cite{calderbank1996good, steane1996multiple}. While quaternary BP has been implemented and discussed in previous works \cite{kuo2020refined, lai2021log, miao2025quaternary}, we aim to develop a more systematic and principled formalization in this section.
\subsection{Quantum Decoding Problem and Graphical Models}
In quantum error correction, an $[[n,k]]$ quantum code protects $k$ logical qubits by encoding them into $n$ physical qubits that live in a subspace $\mathcal{H}_C$ of the Hilbert space $\mathcal{H}^{\otimes n}$. For stabilizer codes, the code space $\mathcal{H}_C$ is defined as the common $+1$ eigenspace of all stabilizers in the stabilizer group $\mathcal{S}$~\cite{gottesman1997stabilizer}. The stabilizer group $\mathcal{S}$ is an abelian subgroup of the $n$-fold Pauli group $\mathcal{P}_n$ with $-I^{\otimes n} \not \in \mathcal{S}$, where $\mathcal{P}_n$ is generated by tensor products of single-qubit Pauli operators $\{I, X, Y, Z\}^{\otimes n}$. The \emph{distance} $d$ of a stabilizer code $\mathcal{C}$ is defined to be the minimum weight of all Pauli operators in $N(\mathcal{S})\backslash \mathcal{S}$, where $N(\mathcal{S})$ denotes the normalizer group of $\mathcal{S}$ in $\mathcal{P}_n$. A stabilizer code $\mathcal{C}$ is called \emph{degenerate} if $d$ is larger than the minimum weight of a nontrivial stabilizer. A CSS code is a special stabilizer code whose stabilizer generators can be chosen to be either purely $X$-type or $Z$-type Pauli operators.
The surface code, introduced by Alexei Kitaev \cite{kitaev2003fault}, forms a family of simple yet highly degenerate CSS codes and is widely regarded as a promising architecture for fault-tolerant quantum computation. A distance-$d$ surface code encodes a single logical qubit using $d^2 + (d-1)^2$ physical qubits.

The quantum \emph{most-likely-error decoding} problem for a stabilizer code
$\mathcal{C}=[[n,k]]$ is the problem of inferring the underlying physical
Pauli error from the observed syndrome.
Let $\underline{E}=(E_1,\dots,E_n)\in\{I,X,Y,Z\}^n$ denote the random physical
Pauli error configuration, where for each qubit index $i\in\{1,\dots,n\}$,
the random variable $E_i$ specifies the Pauli error acting on the physical qubit
$i$. Let $\underline{e}\in\{I,X,Y,Z\}^n$ denote a generic realization of
$\underline{E}$, and let
$\underline{\sigma}=(\sigma_1,\dots,\sigma_m)$ denote the syndrome obtained
from the $m$ stabilizer measurements. Most-likely-error decoding then
corresponds to the following maximum-a-posteriori (MAP) inference over Pauli error configurations:
\vspace{-1.5mm}
\begin{equation}
    \underline{e}^*(\underline{\sigma})
    \in 
    \arg\max_{\underline{e}} \; p(\underline{e}\mid \underline{\sigma})
    =
    \arg\max_{\underline{e}} \; p(\underline{e},\underline{\sigma})
    =
    \arg\max_{\underline{e}} \; p(\underline{\sigma}\mid \underline{e})\,
    p_{\underline{E}}(\underline{e}).
    \label{eq:true_pos}
\end{equation}

\vspace{-1.5mm}
\noindent
where $p_{\underline{E}}(\underline{e})$ denotes the prior probability of the error configuration $\underline{e}$ under the physical noise model, and
$\underline{e}^*(\underline{\sigma})$ denotes a MAP-optimal estimated error, which is a maximizer of the posterior
distribution $p(\underline{e}\mid\underline{\sigma})$ conditioned on the
observed syndrome $\underline{\sigma}$. Identifying each Pauli error $e\in\{I,X,Y,Z\}$ with its binary representation $(x,z)\in\{0,1\}^2$, where $x=1$ for $e\in\{X,Y\}$ and $z=1$ for
$e\in\{Z,Y\}$, the error configuration $\underline{e}$ can equivalently be
represented as $(\underline{x},\underline{z}) \in (\mathbb{F}_2 \times \mathbb{F}_2)^n$, and the error distribution
$p_{\underline{E}}(\underline{e})$ as $p_{\underline{E}}(\underline{x},\underline{z})$. 

A \emph{Tanner graph} $\mathcal{T} = (V, F, \mathcal{E})$ of a given stabilizer code is a factor graph~\cite{poulin2008iterative, richardson2008modern}, where variable nodes $i\in V$ represent physical qubits, factor nodes $a\in F$ correspond to chosen stabilizer generators, and an edge $(i,a)\in \mathcal{E}$ is present if and only if qubit $i$ lies in the support of stabilizer $a$~\cite{mezard2009information, richardson2008modern}. For CSS codes, the factor nodes are naturally partitioned into $X$-type nodes $a^X$ and $Z$-type nodes $a^Z$, corresponding to the chosen $X$- and $Z$-stabilizer generators, respectively. For each factor node $a$, let $\partial a \subseteq V$ denote the set of neighboring variable nodes. The associated local function $\psi_a(\cdot)$ enforces the syndrome constraint of the corresponding stabilizer. Specifically, for a $Z$-type factor $a^Z$, the syndrome $\sigma_{a^Z}$ depends only on the neighboring $X$-components $\underline{x}_{\partial a^Z}$, and the corresponding local function is $\psi_{a^Z}(\underline{x}_{\partial a^Z}, \, \sigma_{a^Z}) = \mathbb{I}\!\left(\bigoplus_{i\in\partial a^Z} x_i = {\sigma}_{a^Z}\right)$. Analogously, for an $X$-type factor $a^X$, the syndrome $\sigma_{a^X}$ depends only on the neighboring $Z$-components $\underline{z}_{\partial a^X}$, and the corresponding local function is $\psi_{a^X}(\underline{z}_{\partial a^X}, \, \sigma_{a^X}) = \mathbb{I}\!\left(\bigoplus_{i\in\partial a^X} z_i = {\sigma}_{a^X}\right)$.
On each variable node $i$, we also associate a prior factor $\psi_i(e_i) = \psi_i(x_i, z_i)$, which encodes the prior Pauli error distribution on qubit $i$. Given a syndrome configuration $\underline{\sigma}$, the Tanner graph $\mathcal{T}$, together with factors $\psi_a$ and $\psi_i$, defines the following posterior distribution $\mu^{\mathcal{T}}(\underline{e} \mid \underline{\sigma})$ over error configurations $\underline{e}=\{e_i=(x_i, z_i):i\in V\}$, where each $e_i\in\{0,1\}^2$.
\vspace{-1.5mm}
\begin{equation}
       \mu^{\mathcal{T}}(\underline{e} \mid \underline{\sigma})= \mu^{\mathcal{T}}(\underline{x}, \underline{z} \mid \underline{\sigma})
      \propto 
       \biggr(\prod_{a^X \in F} \psi_{a^X}(\underline{z}_{\partial a^X}, \,   \sigma_{a^X}) \biggr)\cdot \biggr(\prod_{a^Z \in F} \psi_{a^Z}(\underline{x}_{\partial a^Z}, \,   \sigma_{a^Z}) \biggr)\cdot \biggr(\prod_{i\in V} \psi_i(x_i, z_i)\biggr).
       \label{eq:pos}
\end{equation}

\vspace{-1.5mm}
\noindent
It should be noted that, even under the code-capacity model, the distribution $\mu^{\mathcal{T}}(\underline{e} \mid \underline{\sigma})$ defined on the Tanner graph does not necessarily coincide with the true posterior $p(\underline{e} \mid \underline{\sigma})$ in Eq.~\eqref{eq:true_pos}, unless the physical error model has no spatial correlations and the single-qubit prior satisfies $p_{E_i}(x_i, z_i) \propto \psi_i(x_i, z_i)$. 
 
For surface codes, a canonical graphical representation of the Tanner graph $\mathcal{T}$ is shown in Fig.~\ref{fig:tanner}. The graph $\mathcal{T}$ naturally decomposes into two component \emph{lattices}, $\mathcal{G}_X$ and $\mathcal{G}_Z$, obtained by removing the $Z$-type and $X$-type factor nodes, respectively. In this work, we treat the Tanner graph and these component lattices as being in one-to-one correspondence: any modification of the Tanner graph is reflected in the associated lattice structure, and conversely, any structural change of the lattice induces a corresponding modified Tanner graph. This correspondence allows us to study structural dilution directly at the lattice level while retaining an equivalent representation in terms of the diluted Tanner graph.
\begin{figure}[H]
    \centering
    \scalebox{1}{
    \includegraphics[width=1\textwidth]{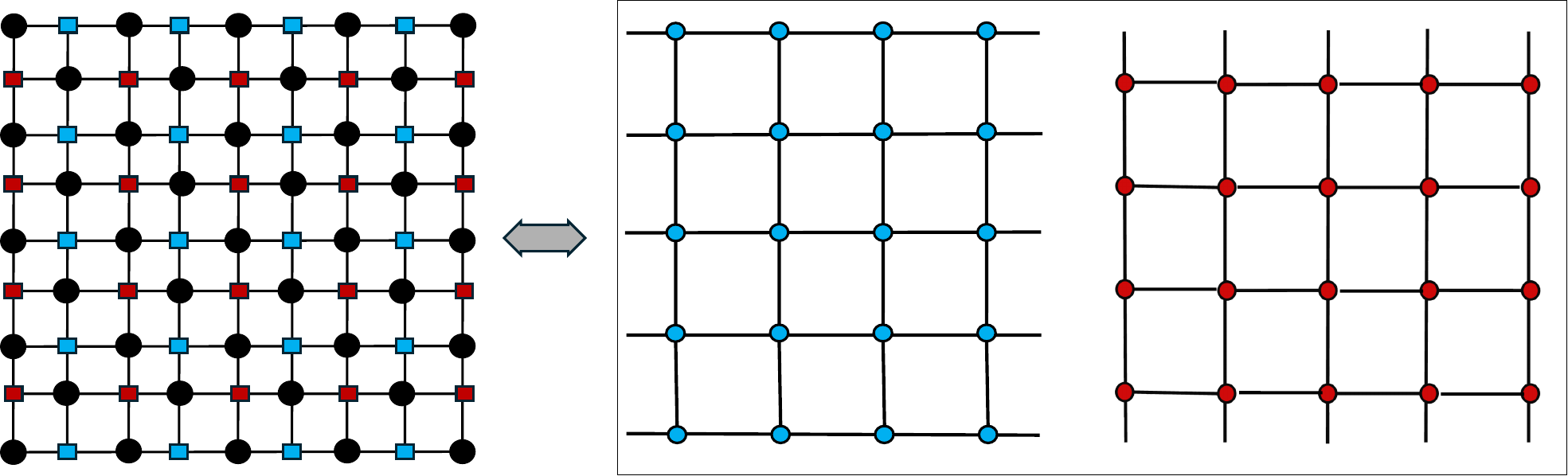} 
    }
    \vspace{-2 mm}
    \caption{Illustration of the canonical Tanner graph $\mathcal{T}$ of the distance-$5$ surface code (left) and the corresponding component lattices $\mathcal{G}_X$ (blue) and $\mathcal{G}_Z$ (red) (right). In $\mathcal{T}$, black variable nodes represent physical qubits, blue and red factor nodes correspond to $X$- and $Z$-type stabilizers, respectively. This Tanner graph decomposes into two canonical lattices in which qubits lie on edges and blue (resp. red) vertices represent $X$-type (resp. $Z$-type) stabilizers. }
    \vspace{-2 mm}
    \label{fig:tanner}
\end{figure} 
\noindent
\vspace{-8mm}
\subsection{Quaternary Message Passing for CSS Codes}
\label{subsec:mp}

After introducing the graphical model for CSS codes, we now present a quaternary message-passing decoding framework on their Tanner graphs. Given the posterior distribution $\mu^{\mathcal{T}}(\underline{e}\mid \underline{\sigma}) = \mu^{\mathcal{T}}(\underline{x}, \underline{z}\mid \underline{\sigma})$ defined on the given graph $\mathcal{T}$, the MP algorithms considered in this work iteratively exchange local messages along the edges of the graph to approximate the single-qubit posterior marginals $\mu_i^{\mathcal{T}}(e_i\mid\underline{\sigma})$ for each qubit $i$, and then construct a bitwise maximizer $e^*_{\mathcal{T}, i}(\underline{\sigma}) \in \arg\max_{e_i} \mu_i^{\mathcal{T}}(e_i \mid \underline{\sigma})$.

In the quaternary BP algorithm, given a syndrome configuration $\underline{\sigma}$, factor-to-variable message passing is employed in the Sum-Product form~\cite{richardson2008modern}. Along each edge $(i,a) \in\mathcal{E}$, the factor-to-variable message $\hat{\nu}^{(t)}_{a\to i}$ at iteration $t$ is obtained by marginalizing the local factor $\psi_a$ over its neighboring variables $\partial a\setminus i$. Owing to the CSS structure, this yields $X$-messages $\hat{\nu}^{(t)}_{a^Z\to i}(x_i)$ and $Z$-messages $\hat{\nu}^{(t)}_{a^X\to i}(z_i)$, corresponding to constraints on the $X$- and $Z$-components of $e_i=(x_i,z_i)$, respectively. As shown on the left of Fig.~\ref{fig:qbp}, the update rules are:
\begin{equation}
    \hat{\nu}_{a^Z \to i}^{(t)}(x_i)\!\propto\!\sum_{\underline{x}_{\partial a^Z \backslash i}} \psi_{a^Z}(\underline{x}_{\partial a^Z}, \sigma_{a^Z})\prod_{j \in \partial a^Z \backslash i}\!\nu^{(t)}_{j \to a^Z}(x_j), 
    \;\;
    \hat{\nu}_{a^X \to i}^{(t)}(z_i)\!\propto\!\sum_{\underline{z}_{\partial a^X \backslash i}} \psi_{a^X}(\underline{z}_{\partial a^X}, \sigma_{a^X})\prod_{j \in \partial a^X \backslash i}\!\nu^{(t)}_{j \to a^X}(z_j).
    \label{eq:f-2-v, bp}
\end{equation}
Since our goal is to identify a posterior maximizer, it is natural to replace the above marginalization step in the factor-to-variable updates by maximization. This yields the following Max–Product form~\cite{richardson2008modern}:
\begin{equation}
\hat{\nu}_{a^Z \to i}^{(t)}(x_i)\!\propto
\max_{\underline{x}_{\partial a^Z\!\setminus\! i}}
\Bigl\{\psi_{a^Z}(\underline{x}_{\partial a^Z},\sigma_{a^Z})
\!\prod_{j\in\partial a^Z\!\setminus\! i}\!\nu^{(t)}_{j\to a^Z}(x_j)\Bigr\},
\;\;
\hat{\nu}_{a^X \to i}^{(t)}(z_i)\!\propto
\max_{\underline{z}_{\partial a^X\!\setminus\! i}}
\Bigl\{\psi_{a^X}(\underline{z}_{\partial a^X},\sigma_{a^X})
\!\prod_{j\in\partial a^X\!\setminus\! i}\!\nu^{(t)}_{j\to a^X}(z_j)\Bigr\}.
\label{eq:f-2-v, ms}
\end{equation}

Note that although the above update involves both $X$ and $Z$ components, the factor-to-variable updates treat the corresponding $X$- and $Z$-messages separately. In this step, since the variables $x_i$ and $z_i$ are binary, the updates for quaternary MP retain the standard binary Sum-Product (or Max-Product) form. 

As shown on the right of Fig.~\ref{fig:qbp}, the distinction arises in the variable-to-factor message passing. For each edge $(i,a)\in \mathcal{E}$, the message $\nu_{i\to a}^{(t)}(\cdot)$ is the belief at variable node $i$ obtained by aggregating all incoming messages from neighboring factor nodes in $\partial i \setminus a$ at iteration $t$. Here, the incoming $X$- and $Z$-messages are coupled through what we refer to as an \emph{$XZ$-correlation message} $\phi_X^{(t)}(x_i)$ and $\phi_Z^{(t)}(z_i)$, which is induced by its initial prior $\psi_i(x_i, z_i)$. Thus, the quaternary variable-to-factor updates take the form:
\begin{equation}
    \nu_{i \to a^Z}^{(t+1)}(x_i) \propto \bigg(\prod_{b^Z \in \partial i \backslash a^Z} \hat{\nu}^{(t)}_{b^Z \to i}(x_i) \bigg) \cdot \phi_X^{(t)}(x_i), \quad \quad 
    \nu_{i \to a^X}^{(t+1)}(z_i) \propto \bigg( \prod_{b^X \in \partial i \backslash a^X} \hat{\nu}^{(t)}_{b^X \to i}(z_i) \bigg) \cdot \phi_Z^{(t)}(z_i).
\end{equation}
where the two specific $XZ$-correlation messages $\phi^{(t)}_X(x_i)$ and $\phi^{(t)}_Z(z_i)$ are defined as follows:
\vspace{-1mm}
\begin{equation}
\phi_X^{(t)}(x_i)
=
\sum_{z_i \in \{0,1\}}
\Bigg(
\prod_{a^X \in \partial i}
\hat{\nu}_{a^X \to i}^{(t)}(z_i)
\Bigg)\,\psi_i(x_i,z_i), 
\quad 
\phi_Z^{(t)}(z_i)
=
\sum_{x_i \in \{0,1\}}
\Bigg(
\prod_{a^Z \in \partial i }
\hat{\nu}_{a^Z \to i}^{(t)}(x_i)
\Bigg)
\, \psi_i(x_i,z_i).
\end{equation}
\begin{figure}[H]
    \centering
    \scalebox{1}{
    \includegraphics[width=1\textwidth]{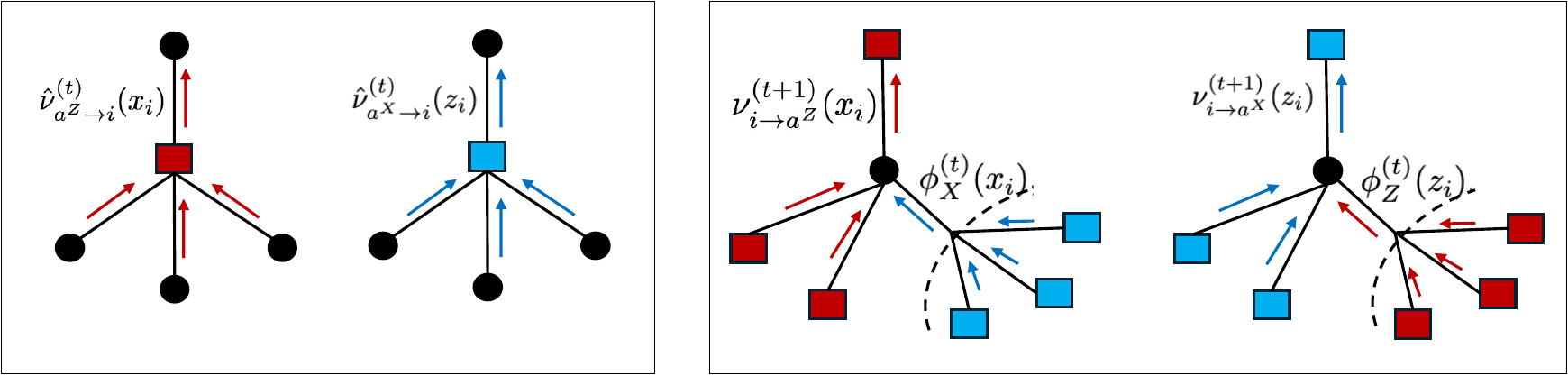} 
    }
    \vspace{-3 mm}
    \caption{Illustration of quaternary message passing for CSS codes. 
    The left panel shows the $Z$- (resp. $X$-) factor-to-variable update $\hat{\nu}_{a^Z \to i}^{(t)}(x_i)$ (resp. $\hat{\mu}_{a^X \to i}^{(t)}(z_i)$ ) from 
    $a^Z$ (resp. $a^X$) to variable node $i$, which proceeds independently and is not coupled. 
    The right panel shows the variable-to-factor ($Z$- and $X$-factor) update $\nu_{i \to a^Z}^{(t+1)}(x_i)$ and $\mu_{i \to a^X}^{(t+1)}(z_i)$, where the messages become coupled 
    through the $XZ$-correlation messages $\phi_X^{(t)}(\cdot)$ and $\phi_Z^{(t)}(\cdot)$.  }
    \vspace{-2 mm}
    \label{fig:qbp}
\end{figure} 
\noindent
After $t$ iterations, the MP \emph{estimated marginal} $\nu_i^{(t)}(e_i)=\nu_i^{(t)}(x_i, z_i)$ of qubit $i$ are obtained by combining all incoming $X$- and $Z$-messages with the initial error prior $\psi(e_i)$. Then, the MP estimated error on qubit $i$ at iteration $t$ is $\hat{e}^{(t)}_{i} \in \arg\max_{e_i} \nu^{(t)}_i(e_i)$, where $\nu_i^{(t)}(e_i)$ is:  
\begin{equation}
\nu_i^{(t)}(e_i) = \nu_i^{(t)}(x_i, z_i) =  \bigg( \prod_{a^Z \in \partial i} \hat{\nu}^{(t-1)}_{a^Z \to i}(x_i) \bigg)\cdot \bigg( \prod_{a^X \in \partial i} \hat{\nu}^{(t-1)}_{a^X \to i}(z_i) \bigg) \cdot \psi(x_i, z_i).
\label{eq:eff_marg}
\end{equation}
\noindent
In practice, message-passing decoding is implemented in the log domain for computational efficiency. Passing to the log domain transforms products of messages into sums, and the Max–Product updates into equivalent Min–Sum (MS) updates~\cite{mezard2009information}. Compared to standard BP, MS message passing is more numerically stable~\cite{panteleev2021degenerate}. In the log domain, the marginal beliefs in Eq.~\ref{eq:eff_marg} can be
conveniently parameterized by \emph{effective fields}, defined as the
log-likelihood ratios of the corresponding binary variables:
\begin{equation}
h_{i, X}^{(t)} = \log \frac{\nu_i^{(t)}(x_i=0)}{\nu_i^{(t)}(x_i=1)}, 
\qquad 
h_{i, Z}^{(t)} = \log \frac{\nu_i^{(t)}(z_i=0)}{\nu_i^{(t)}(z_i=1)}.
\end{equation}
\noindent
These effective fields serve as sufficient statistics for the binary variables and provide a convenient representation of the marginals in the log domain.

It is well known that on tree-structured graphs, Belief Propagation and Max-Product yield exact marginals and max-marginals. On graphs with loops, such as the Tanner graphs of stabilizer codes, these algorithms are generally no longer guaranteed to be exact or even converge. Nevertheless, a widely adopted empirical modification, with a long history in the literature~\cite{pretti2005message}, is to introduce a memory (or damping) factor $\epsilon$ into the message updates, as described below. This approach often improves convergence behavior without increasing the algorithmic complexity~\cite{muller2025improved, kuo2022exploiting, mooij2012sufficient, zivan2020beyond}.

\begin{equation}
\nu_{i \to a^Z}^{(t+1)}(x_i)
\;\propto\;
\prod_{b^Z \in \partial i \setminus a^Z} \hat{\nu}^{(t)}_{b^Z \to i}(x_i)\cdot \Big(\phi_X(x_i) \Big)^{1-\epsilon}
\cdot \Big(\nu_i^{(t)}(x_i)\Big)^{\epsilon}.
\end{equation}
The corresponding update for $\nu_{i \to a^X}^{(t+1)}(z_i)$ follows by symmetry, with the same damping parameter $\epsilon$.

\newpage
\section{Graph Sparsification and Dilution}
\label{sec:diluted}

This section introduces the core recipe of the dilution method. As the name suggests, the method operates by progressively diluting the decoding graph through a structured sparsification of the physical-qubit representation. In what follows, we define and focus on two important dilution patterns and highlight the distinct roles they play. We also introduce the notion of the \emph{effective error} on the diluted lattice, which will be crucial for the interpretation of the resulting message-passing dynamics in Section~\ref{sec:analysis}. 
\vspace{-1mm}
\subsection{Sparsification and Effective Error}

Informally, an $s$-sparsification removes $s$ consecutive layers of edges and retains the $(s+1)$-th layer. For surface codes, such layers can be defined in a canonical way using the coordinates of the underlying 2D lattice. Throughout this work, we refer to the parameter $s$ as the \emph{sparsification ratio}. We introduce two sparsification patterns—\emph{Diagonal} and \emph{Cartesian}—suited for depolarizing noise and pure $X$-noise, respectively.

\begin{definition}[Diagonal $s$-Sparsification]
Let $\{D_k\}_{k\in\mathbb{Z}}$ denote the ordered family of 
diagonal lines of lattice $\mathcal{G}$. 
The diagonal $s$-sparsification retains those diagonals satisfying
$
k \equiv 0 \pmod{s+1}, 
$
\noindent
and removes either all vertical or all horizontal edges from the remaining diagonals. The resulting graph is called the $D_V^s$-diluted lattice, denoted by $\mathcal{G}^s_{D_V}$, or $D^s_H$-diluted lattice, denoted by $\mathcal{G}^s_{D_H}$, depending on whether vertical or horizontal edges are removed.
\label{def:diagd}
\end{definition}
\vspace{-5mm}
\begin{figure}[H]
    \centering
    \scalebox{1}{
    \includegraphics[width=1\textwidth]{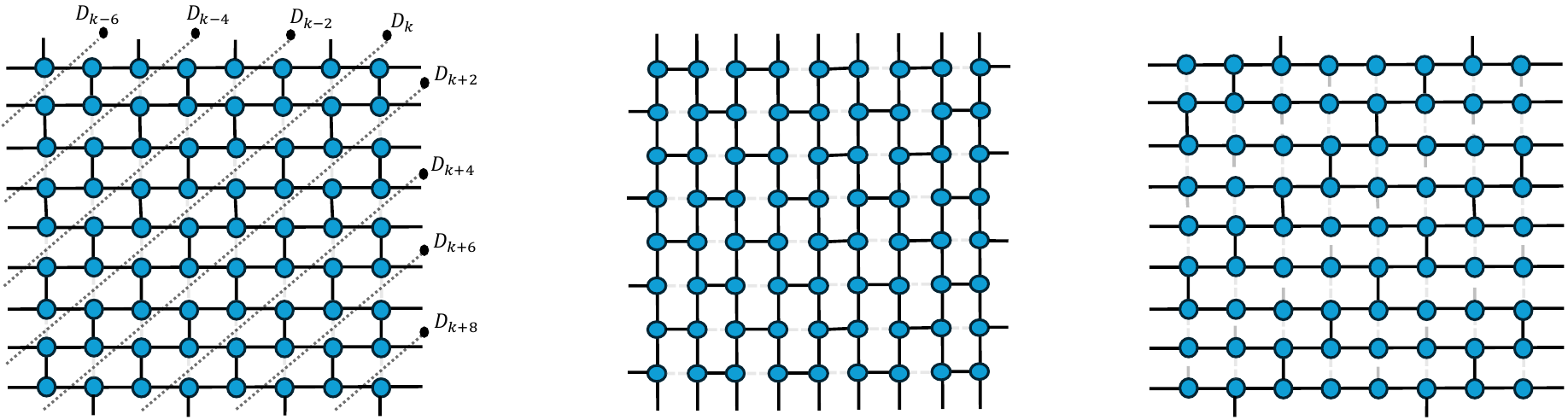} 
    }
    \vspace{-4 mm}
    \caption{Examples of diluted graphs obtained by diagonal $s$-sparsification. From left to right: the $D_V^1$-diluted lattice $\mathcal{G}^1_{D_V}$, the $D_H^1$-diluted lattice $\mathcal{G}^1_{D_H}$, and the $D_V^3$-diluted lattice $\mathcal{G}^3_{D_V}$.}
    \vspace{0 mm}
    \label{fig:diag}
\end{figure}

\begin{definition}[Cartesian $s$-Sparsification]
Let $\{C_k\}_{k\in\mathbb{Z}}$ denote the ordered family of 
Cartesian grid lines, either all horizontal or all vertical grid lines, in a lattice $\mathcal{G}$. The Cartesian $s$-sparsification retains only those grid lines satisfying
$
k \equiv 1 \pmod{s+1},
$
and removes the remaining grid lines of the same orientation. The resulting graph is called the $C_V^s$-diluted lattice, denoted by $\mathcal{G}^s_{C_V}$, or $C^s_H$-diluted lattice, denoted by $\mathcal{G}^s_{C_H}$, depending on whether horizontal or vertical grid lines are sparsified.
\label{def:horid}
\end{definition}
\vspace{-4mm}
\begin{figure}[H]
    \centering
    \scalebox{1}{
    \includegraphics[width=1\textwidth]{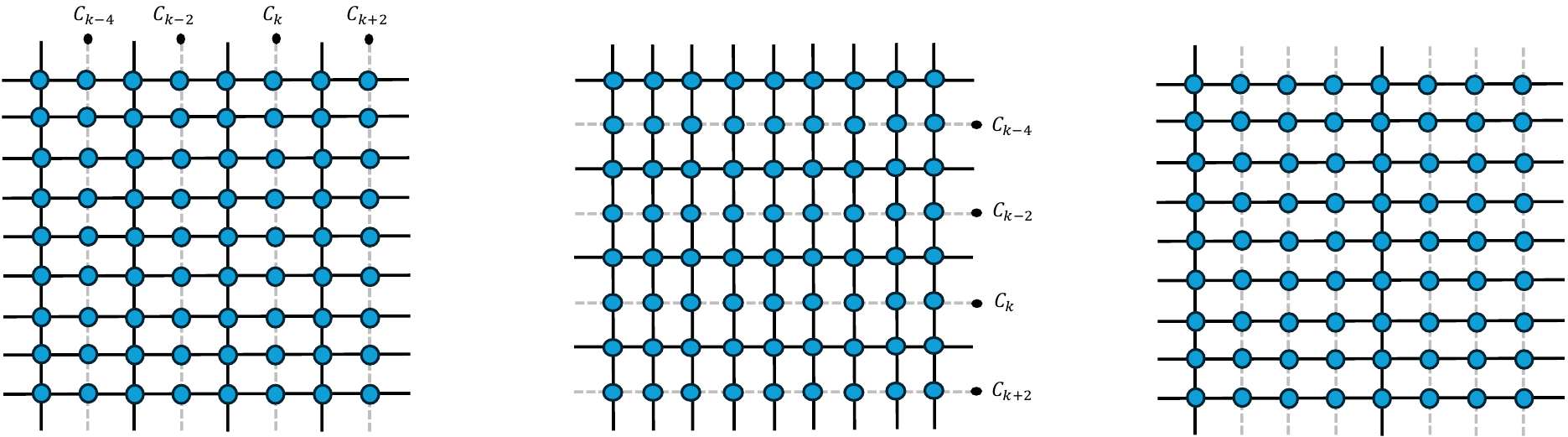} 
    }
    \vspace{-4 mm}
    \caption{Examples of diluted graphs obtained by Cartesian $s$-sparsification. From left to right: the $C_V^1$-diluted lattice $\mathcal{G}^1_{C_V}$, the $C_H^1$-diluted lattice $\mathcal{G}^1_{C_H}$,  and the $C_V^3$-diluted lattice $\mathcal{G}^3_{C_V}$.}
    \vspace{0 mm}
    \label{fig:hori}
\end{figure}
\noindent
As illustrated in Fig.\ref{fig:diag} and Fig.\ref{fig:hori}, increasing the sparsification ratio $s$ produces progressively more diluted decoding graphs. In principle, the sparsification ratio $s$ can be any nonnegative integer less than the code distance $d$. It follows from Definitions~\ref{def:diagd} and~\ref{def:horid} that the $s$-diluted lattice has girth $2s+4$. However, as $s$ increases, the error-correcting capability of the diluted graph inevitably degrades. It is therefore useful to establish a combinatorial guarantee that quantifies the error-correcting capability of an $s$-diluted graph. To quantify the error-correcting capability of a fixed diluted graph, we introduce the following definition.

\begin{definition}[Error-Correcting Radius]
Given a (diluted) decoding graph $\mathcal{G}^s \subseteq \mathcal{G}$, an \emph{ideal decoder} returns a minimum-weight correction supported on $\mathcal{G}^s$ for the observed syndrome. The \emph{error-correcting radius} of diluted graph $t(\mathcal{G}^s)$ is the largest integer $t$ such that, for every physical error $\underline e$ supported on at most $t$ qubits on $\mathcal{G}$, the correction returned by the ideal decoder operating on $\mathcal{G}^s$ is logically equivalent to $\underline e$.
\label{def:ecr}
\end{definition}

\noindent
For the surface code under single-$X$ noise, the error-correcting radius of the original $Z$-decoding lattice is half the code distance, namely $t(\mathcal{G}) = \lfloor (d-1)/2 \rfloor$. For the diluted lattices, the following theorem establishes a lower bound on the corresponding error-correcting radius for both $C^s_H$- and $D^s_H$-diluted lattices.
\begin{theorem}
\label{tm:hloss}
Fix a sparsification ratio $s \ge 1$, $s\in\mathbb{N}$, for any surface code of distance $d$, consider the diluted lattice $\mathcal{G}_{D_H}^s$ and $\mathcal{G}_{C_H}^s$ obtained from diagonal and Cartesian $s$-sparsification, respectively. Then the error-correcting radius satisfies
\begin{equation}
       t(\mathcal{G}^s_{D_H}) \geq \Big\lfloor\frac{d-1}{2\cdot \lfloor\frac{s+1}{2}\rfloor + 2} \Big \rfloor, \quad \quad 
    t(\mathcal{G}^s_{C_H}) \geq \biggr\lfloor\frac{d-1}{\lfloor\frac{s-1}{2}\rfloor + 2} \biggr \rfloor
    .
\end{equation} 
\label{tm:distance}
\end{theorem}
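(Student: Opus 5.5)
We reduce the claim to a statement about the \emph{effective distance} of the diluted lattice, measured with respect to the original error weight. The standard argument for minimum-weight decoding is as follows. Suppose a physical error $\underline e$ of weight $w$ on $\mathcal{G}$ produces syndrome $\underline\sigma$, and let $\underline c$ be the minimum-weight correction supported on $\mathcal{G}^s$. The residual $\underline e+\underline c$ has trivial syndrome. Hence it is either a stabilizer, in which case decoding succeeds, or a nontrivial logical operator. It therefore suffices to show two things. First, some correction supported on $\mathcal{G}^s$ of weight at most $\alpha w$ exists, where $\alpha$ is a dilation factor depending only on $s$ and the pattern. Second, any nontrivial logical operator has weight at least $d$. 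If both hold, then $w+|\underline c|\le (1+\alpha)w$. Whenever $(1+\alpha)w<d$, the residual cannot be a logical, which gives $t(\mathcal{G}^s)\ge\lfloor (d-1)/(1+\alpha)\rfloor$. Matching the statement then amounts to $\alpha = 2\lfloor (s+1)/2\rfloor+1$ for $D_H$ and $\alpha=\lfloor (s-1)/2\rfloor+1$ for $C_H$.

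The main work is constructing the short "pushed" correction. We first show that every syndrome vertex in the $Z$-lattice remains a vertex of $\mathcal{G}^s$, and that $\mathcal{G}^s$ is connected, including the boundaries. Then, for each erroneous edge $q\in\operatorname{supp}(\underline e)$ not present in $\mathcal{G}^s$, we replace $q$ by a path $P_q\subseteq\mathcal{G}^s$ joining its two endpoints, or one endpoint to the appropriate rough boundary for boundary qubits. The union $q\cup P_q$ is a cycle or a boundary-anchored loop in $\mathcal{G}$, and we will check that it is homologically trivial. Summing the $P_q$ over $\mathbb{F}_2$ gives a correction on $\mathcal{G}^s$ with the same syndrome that is stabilizer-equivalent to $\underline e$. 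Its weight is at most $\max_q|P_q|\cdot w$, so we need the uniform bound $|P_q|\le\alpha$.

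For the Cartesian pattern, the horizontal edges lying on a removed grid line must be rerouted along kept vertical edges to the nearest retained horizontal line and back. We will instead use the nearer of the two retained lines. A removed line sits at offset $j\in\{1,\dots,s\}$ within its block, so the distance to the nearest retained line is at most $\lceil s/2\rceil$. The counting with the parity convention $k\equiv1 \pmod{s+1}$ should give detour cost $\lfloor (s-1)/2\rfloor+1$ per qubit. For the diagonal pattern, a removed edge must be rerouted through the staircase structure of the retained diagonals, where only one orientation of edge survives. We will parametrize by the diagonal index and bound the detour by walking to the nearest retained diagonal, at distance at most $\lfloor (s+1)/2\rfloor$, going across, and returning. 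This yields $2\lfloor (s+1)/2\rfloor+1$.

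The main obstacle is getting these detour lengths exactly right, including near the code boundary, where a nearest retained line or diagonal may not exist on one side. In that case we will route to the boundary instead, using the rough-boundary structure, and we must verify that the resulting loop is still trivial rather than a logical. If the per-qubit bound $\alpha$ turns out to be slightly too weak, we will refine the argument. Instead of charging each qubit independently, we will charge the detour to pairs of erroneous qubits that share a removed line segment.
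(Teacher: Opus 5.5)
Your reduction is valid: if $|\underline e|+|\underline c|<d$, the residual cannot be a logical. For $\mathcal{G}^s_{D_H}$ it reproduces the stated bound in essentially the paper's way. The paper also bounds the correct-coset representative by a per-error expansion ($2\lfloor (s+1)/2\rfloor$ vertical edges plus one retained horizontal edge), and it uses the competing representative only through the $d$ horizontal crossings a logical needs.

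The Cartesian half, however, has a genuine gap. A removed horizontal edge at offset $j\in\{1,\dots,s\}$ has both endpoints on a removed line, where $\mathcal{G}^s_{C_H}$ has only vertical edges. So the cheapest correction of an isolated such error in the bulk costs $2\min(j,s+1-j)+1\ge 3$. Your own distance estimate $\lceil s/2\rceil$ therefore gives $\alpha=2\lceil s/2\rceil+1$, which is at best the diagonal-type bound again. It does not give $\alpha=\lfloor (s-1)/2\rfloor+1$, which at $s=1$ would mean zero overhead.

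Neither sharper routing nor charging to pairs can repair this, because the loss is built into comparing $\underline e+\underline c$ with $d$. At $s=1$, place $\lfloor (d-1)/2\rfloor$ pairwise non-adjacent horizontal errors on one removed row. Each of the $2|\underline e|$ syndrome vertices has only its own vertical edges in $\mathcal{G}^1_{C_H}$, so every correction has $|\underline c|\ge 2|\underline e|$. Then $|\underline e|+|\underline c|\ge 3|\underline e|\ge d$ for $d\ge 5$, so your criterion is silent, even though the theorem asserts these errors are decoded correctly.

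The missing idea is to compare the two cosets on the diluted lattice directly, i.e.\ to show $|\tilde{\underline e}|-|\tilde{\underline{\ell}}(\underline e)|<0$. The vertical cost of leaving removed lines is paid by every correction, so it cancels. The paper splits this difference into two parts:
\begin{itemize}
\item a band term, bounded by $\lfloor (s-1)/2\rfloor$ per error;
\item a retained-line term, bounded by $2|\underline e|-d$, because $\tilde{\underline e}\oplus\tilde{\underline{\ell}}$ is a logical needing at least $d$ retained horizontal edges while $\tilde{\underline e}$ uses at most $|\underline e|$ of them.
\end{itemize}
At $s=1$ this is easy to make rigorous. Suppose $m$ syndrome vertices lie on removed rows and $h$ of the errors are horizontal. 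Any wrong-coset correction needs at least $m$ vertical edges. Since $\underline e\oplus\tilde{\underline{\ell}}$ must cross each of the $d$ column gaps, it also needs at least $d-h$ horizontal edges, so $|\tilde{\underline{\ell}}|\ge d-h+m$. Now push each maximal horizontal error segment on a removed row onto an adjacent retained row, choosing the direction according to the vertical errors at its two endpoints. This gives $|\tilde{\underline e}|\le|\underline e|+m$. Hence decoding succeeds whenever $|\underline e|+h<d$.

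Your boundary worry also applies to $D_H$. Beside a smooth boundary, the only retained edge in a given column gap can be $s$ rows away, so an isolated error there costs $2s+1$. Thus $2\lfloor (s+1)/2\rfloor+1$ is only a bulk constant, and rerouting to a rough boundary does not help away from the corners. The paper's ``all blocks identical'' step glosses over the same point.
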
 
\vspace{-5mm}
\begin{proof}
A detailed proof is given in Appendix~\ref{ap:1}.
\end{proof}
\noindent
From Theorem~\ref{tm:distance}, we see that for single $X$ errors, at the same sparsification ratio $s$, Cartesian $s$-sparsification gives a stronger lower bound on the error-correcting radius than diagonal $s$-sparsification, which is consistent with the numerical results in~\cite{zhang2025belief}. In particular, when $s=1$, the $C_H^1$-diluted lattice $\mathcal{G}_{C_H}^{1}$ can still provably correct all errors of weight strictly below half the distance:
$t(\mathcal{G}_{C_H}^{1})=\left\lfloor \frac{d-1}{2}\right\rfloor$ .
Thus, in the worst-case sense, the $C_H^1$-diluted lattice incurs no loss in $X$-error-correcting capability.

One justification for the diluted lattices comes from the intrinsic local stabilizer redundancy present in quantum LDPC codes. Specifically, the same syndrome pattern $\underline{\sigma}$ can admit multiple equivalent error configurations, allowing correction to be performed even on a suitably diluted decoding graph~\cite{zhang2025belief}.
A more illuminating perspective, which will be important for the interpretation in Section~\ref{sec:analysis}, 
comes from the viewpoint of the effective error configuration. The diluted lattice $\mathcal G^s$ provides a coarse-grained decoding geometry obtained by sparsifying the original lattice. On this coarse-grained lattice, and assuming we are in the correctable regime, local products of stabilizers renormalize physical error configurations $\underline e$ into effective errors $\underline e^s$ supported on the diluted lattice, as formalized in Definition~\ref{def:effective} and illustrated in Fig.~\ref{fig:effective}.

\begin{definition}[Effective Error on the Diluted Lattice]
Let $\underline{e}$ be an error configuration on the original lattice $\mathcal{G}$ with syndrome $\underline{\sigma} = \underline{\sigma}(\underline{e})$. Given a diluted lattice $\mathcal{G}^s \subseteq \mathcal{G}$, the corresponding \emph{effective error} $\underline{e}^{s} = \mathcal{R}^{s}(\underline{e})$ is defined as a minimum-weight error configuration supported on $\mathcal{G}^s$ that reproduces the same syndrome pattern, namely, $\underline{\sigma}(\underline{e}^s) = \underline{\sigma}(\underline{e})$. We call $\mathcal R^s$ the global renormalization map.
\label{def:effective}
\end{definition}
\vspace{-2mm}
\begin{figure}[H]
    \centering
    \scalebox{1}{
    \includegraphics[width=1\textwidth]{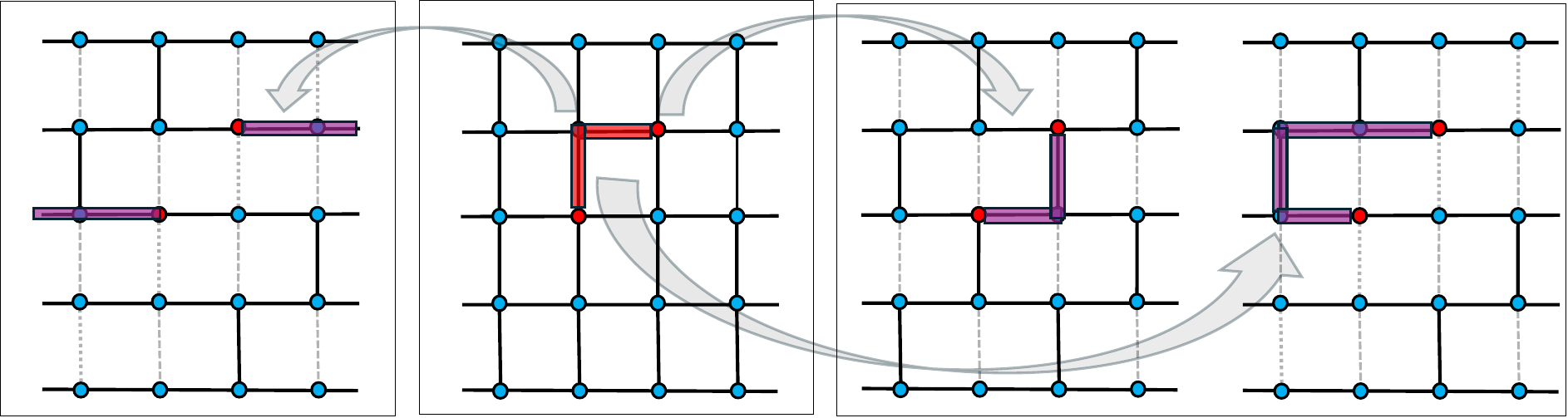} 
    }
    \vspace{-5 mm}
    \caption{Illustration of effective errors on diluted lattices. The middle panels show physical errors (red strings) on the original lattice, and their corresponding effective errors on the diluted lattices are shown in purple. The effective errors on the right are stabilizer-equivalent to the physical error, while the one on the left differs by a logical operator.}
    \label{fig:effective}
\end{figure}



\subsection{Diluted Tanner Graph and Graph-Dilution Sequence}
\label{sec:effective}

The diluted lattice is a geometric construction, whereas the quaternary message-passing decoder developed in Sec~\ref{subsec:mp} operates on a decoding Tanner graph. Thus, we first introduce a \emph{diluted Tanner graph} that translates the previously defined diluted lattice into the graphical model used for decoding.

\begin{definition}[Diluted Tanner Graph]
Consider a surface code with Tanner graph \(\mathcal{T}\). Given an \(s\)-sparsification that produces one of the diluted lattices \(\mathcal{G}^s_{D_V}\), \(\mathcal{G}^s_{D_H}\), \(\mathcal{G}^s_{C_V}\), or \(\mathcal{G}^s_{C_H}\), then the induced \emph{$s$-diluted Tanner graph} \(\mathcal{T}^s=(V^s,\mathcal{E}^s,F)\) is obtained by removing from \(\mathcal{T}\) all variable nodes corresponding to lattice edges deleted by the \(s\)-sparsification, together with their incident Tanner edges.
\label{def:diluted_tanner_graph}
\end{definition}

\begin{figure}[H]
    \centering
    \scalebox{1}{
    \includegraphics[width=1\textwidth]{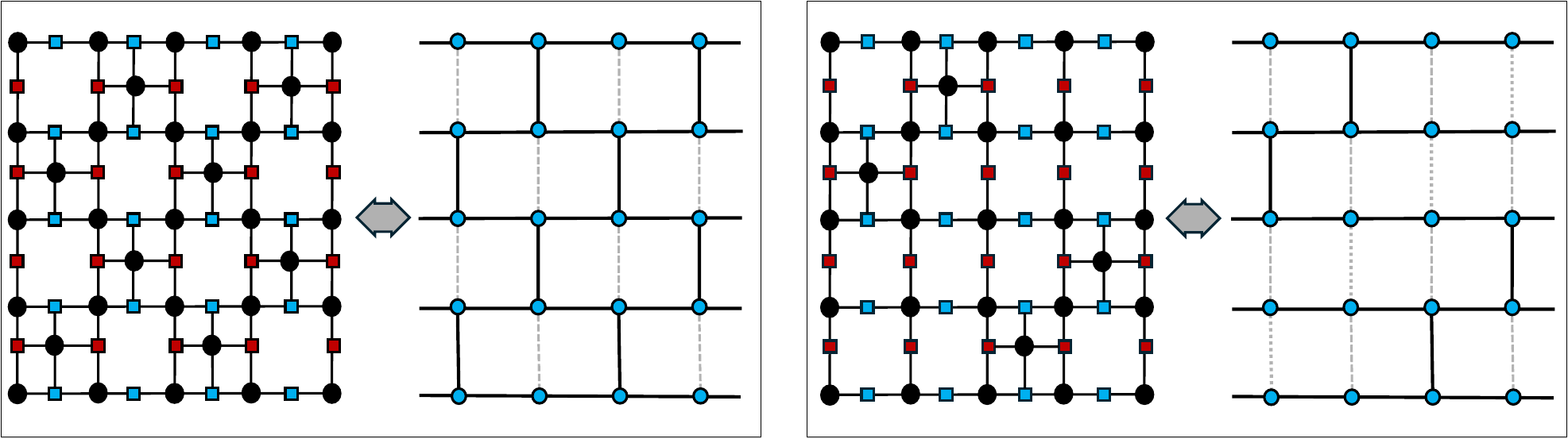} 
    }
    \vspace{-3mm}
    \caption{Examples of diluted Tanner graphs $\mathcal{T}^s$ induced by the $D_V^s$-diluted lattice $\mathcal{G}^s_{D_V}$ via diagonal sparsification pattern $D_V^s$: a $D_V^1$-diluted Tanner graph $\mathcal{T}^1_{D_V}$ (left panel), and a $D_V^3$-diluted Tanner graph $\mathcal{T}^3_{D_V}$(right panel).}
    \vspace{0 mm}
    \label{fig: diluted tanner}
\end{figure}
\noindent
 By making explicit the correspondence between the diluted Tanner graph and its component lattices, the error-correcting radius of the diluted Tanner graph admits a natural characterization in terms of the error-correcting radius of the corresponding diluted $X$- and $Z$-lattices, as shown in the following definition. 

\begin{definition}[Error-Correcting Radius of the Diluted Tanner Graph] 
\label{def: distance_loss}
Let $\mathcal{T}^s$ be a $s$-diluted Tanner graph and let $\mathcal{G}_X^s$ and $\mathcal{G}_Z^s$ be its corresponding diluted $X$- and $Z$-lattice, respectively. The corresponding error-correcting radius $t(\mathcal{T}^s)$ is defined as
\begin{equation}
t(\mathcal{T}^s)=\min \Big\{\,t\big(\mathcal{G}_X^s\big), \; t\big(\mathcal{G}_Z^s\big) \,\Big\}.
\end{equation}
where $t\big(\mathcal{G}_X^s\big)$ and $t\big(\mathcal{G}_Z^s\big)$ denote the error-correcting radius of $X$- and $Z$-lattice, respectively.
\label{def:radiu}
\end{definition}
\noindent

\noindent
Previously, Theorem~\ref{tm:distance} showed that, under single \(X/Z\) noise, the Cartesian pattern \(C_H^s\) gives a stronger lattice-level lower bound on the error-correcting radius than the diagonal pattern \(D_H^s\). However, the radius of the diluted Tanner graph, \(t(\mathcal T^s)\), depends on both component lattices; see Definition~\ref{def:radiu}. Thus, although Cartesian sparsification performs well on one component, it disconnects the other, causing \(t(\mathcal T^s)\) to vanish. In contrast, diagonal sparsification preserves the \(X\)--\(Z\) stabilizer symmetry of the surface code and therefore maintains balanced error-correcting capability against both \(X\)- and \(Z\)-type errors, as shown in Proposition~\ref{prop:symm}.

\begin{proposition}
Fix a sparsification ratio $s \in \mathbb{N}$, for any surface code of distance $d$, the error-correcting radius of its diluted Tanner $\mathcal{T}^s$ obtained from the diagonal $s$-sparsifications satisfies
\begin{equation}
       t(\mathcal{T}^s) = t\big(\mathcal{G}_X^s\big) = t\big(\mathcal{G}_Z^s\big).
\end{equation}
\label{prop:symm}
\end{proposition}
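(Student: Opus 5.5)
The plan is to exhibit an explicit symmetry of the surface code that exchanges the two component lattices and carries the diagonal dilution pattern onto itself. Such a map gives a weight-preserving bijection between the error-correction problems on $\mathcal{G}_X^s$ and $\mathcal{G}_Z^s$, so the two radii coincide. The equality with $t(\mathcal{T}^s)$ then follows immediately from Definition~\ref{def:radiu}, since the minimum of two equal numbers is their common value.

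\textbf{The map.} For the planar surface code (rough/smooth boundaries as in Fig.~\ref{fig:tanner}), consider the lattice reflection $\rho$ across the main diagonal, i.e.\ $(x,y)\mapsto(y,x)$ in the coordinates of the underlying 2D lattice. Combined with the standard duality, $\rho$ sends qubits to qubits, $X$-stabilizers to $Z$-stabilizers and vice versa, and exchanges the rough and smooth boundaries. As a result it maps $\mathcal{G}_X$ isomorphically onto $\mathcal{G}_Z$ and maps the logical $X$ operator class onto the logical $Z$ operator class.

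\textbf{Invariance of the dilution.} Next I verify that $\rho$ preserves the diagonal sparsification. There are three points to check.
\begin{itemize}
\item A diagonal $D_k$ is a line of constant $x-y$ (or $x+y$, depending on the chosen orientation of diagonals). Reflection either fixes each such line or sends $D_k$ to $D_{-k}$, so the retained set $\{k\equiv 0 \pmod{s+1}\}$ is preserved.
\item The reflection swaps vertical and horizontal edges. However, the duality between primal and dual lattices also rotates each qubit-edge by $90^\circ$. The composition therefore sends the edges removed in the $D_V$ pattern on one component to the edges removed in the corresponding pattern on the other component.
\item Consequently, the qubit set $V^s$ of Definition~\ref{def:diluted_tanner_graph} is invariant under $\rho$, and $\rho$ restricts to a graph isomorphism $\mathcal{G}_X^s\to\mathcal{G}_Z^s$ that respects both syndromes and logical classes.
\end{itemize}

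\textbf{Transfer of error-correcting radii.} Given any $Z$-error $\underline e$ of weight at most $t$ on $\mathcal{G}_X$, its image $\rho(\underline e)$ is an $X$-error of the same weight on $\mathcal{G}_Z$. A minimum-weight correction supported on $\mathcal{G}_X^s$ corresponds bijectively to one on $\mathcal{G}_Z^s$, and the correction is logically equivalent to $\underline e$ exactly when its image is logically equivalent to $\rho(\underline e)$. Hence $t(\mathcal{G}_X^s)\ge t(\mathcal{G}_Z^s)$, and by symmetry the two are equal.

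\textbf{Main obstacle.} The hard step is the bookkeeping that the dilution pattern is genuinely $\rho$-invariant at the boundaries and for the index offset of the retained diagonals, since the diagonal labelling of $\mathcal{G}_X$ and $\mathcal{G}_Z$ may differ by a shift. My first approach is to fix a single global coordinate system on the Tanner graph, labelling qubits by integer pairs with stabilizers at half-integer offsets. I would then define the diagonals of both component lattices as the same qubit lines $x-y=k$, so that the sparsification acts on the common qubit set. If an offset mismatch appears, I would instead use the reflection about the anti-diagonal, or compose $\rho$ with a translation chosen to realign the indices mod $s+1$.
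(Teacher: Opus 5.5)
Your proposal is correct and follows the same route as the paper. A diagonal reflection of the canonical embedding gives an isomorphism $\mathcal{G}_X \to \mathcal{G}_Z$ that swaps check types and leaves the diagonal sparsification pattern invariant, so it restricts to $\mathcal{G}_X^s \to \mathcal{G}_Z^s$, and the equality with $t(\mathcal{T}^s)$ then follows from Definition~\ref{def:radiu}. The paper simply takes the reflection axis perpendicular to the diagonals, so that $\phi(D_k)=D_k$ for every $k$ (your fallback choice), which avoids the index-offset bookkeeping; it also leaves implicit the vertical/horizontal and logical-class checks that you make explicit.
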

\vspace{-5mm}
\begin{proof}
Under the canonical embedding in Fig.~\ref{fig:tanner}, the $X$- and $Z$-decoding component lattices of surface code $\mathcal{G}_X$ and $\mathcal{G}_Z$ are related by a diagonal reflection. This induces a bipartite graph isomorphism
$
\phi:\; \mathcal{G}_X \xrightarrow{\cong} \mathcal{G}_Z,
$
which maps qubits to qubits, swap $X$-check with $Z$-checks, and preserving adjacency.

By Definition~\ref{def:diagd}, a diagonal $s$-sparsification removes qubits lying on diagonal
lines $D_k$ that do not satisfy the congruence condition. Since the diagonal reflection
$\phi$ naturally preserves each diagonal line, i.e., $\phi(D_k) = D_k,\, \forall k$, the set of removed (and retained) subgraphs is invariant under $\phi$. Hence, $\phi$ restricts to an isomorphism
$
\phi:\mathcal{G}_X^{\,s} \xrightarrow{\cong} \mathcal{G}_Z^{\,s}
$, and therefore
$$
t(\mathcal{G}_X^{\,s})=t(\mathcal{G}_Z^{\,s}).
$$
Finally, by Definition~\ref{def:radiu}, we have $t(\mathcal{T}^s)=\min \Big\{\,t\big(\mathcal{G}_X^s\big), \; t\big(\mathcal{G}_Z^s\big) \,\Big\} =t(\mathcal{G}_X^{\,s})=t(\mathcal{G}_Z^{\,s})$
as claimed.
\end{proof}

\noindent
From Theorem~\ref{tm:distance} and Proposition~\ref{prop:symm}, we can see that when the sparsification ratio $s$ is fixed, the error-correcting radius is lower bounded by a constant independent of distance $d$. However, as $s$ increases, this lower bound eventually vanishes. This matches the intuition that the error-correcting capability degrades as the graph is further sparsified. Therefore, running BP or MS on a single static diluted graph is not expected to work. The dilution method instead applies message passing successively to a sequence of progressively diluted decoding graphs, together with collective decimation between stages. Here, we first define \emph{graph-dilution sequence}, which produces the sequence of progressively diluted Tanner graphs induced by the corresponding diluted lattices. An illustrative example is shown in Fig.~\ref{fig: dilution_sequence}.

\begin{definition}[Graph-Dilution Sequence]
Given a distance-$d$ surface code and a fixed sparsification pattern, the corresponding graph-dilution sequence is the sequence of diluted Tanner graphs
$
\{\mathcal{T}^{(k)}\}_{k=0}^{K},
$
where $K=\lfloor\log_2 (d-1) \rfloor$. At stage-$k$, the sparsification ratio
$
s_k = 2^k - 1,
$
and the diluted graph is denoted by $\mathcal{T}^{(k)}=\mathcal{T}^{s_k}$.
\label{def:dilution-sequence}
\end{definition}
\vspace{-3mm}
\begin{figure}[H]
    \centering
    \scalebox{1}{
    \includegraphics[width=1\textwidth]{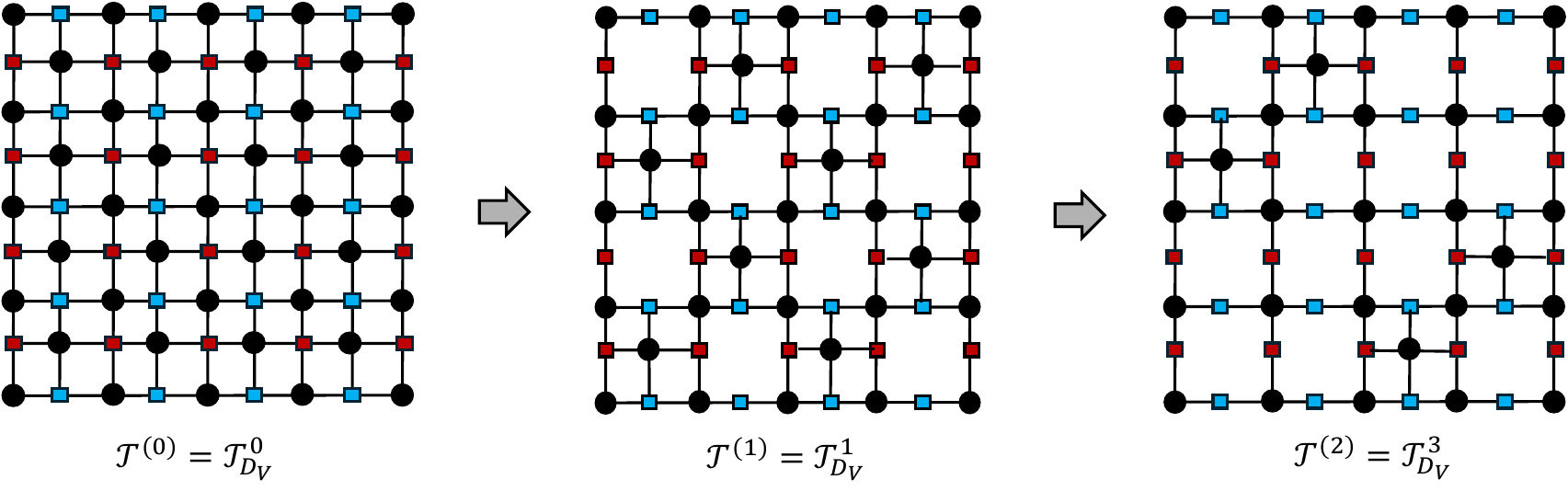} 
    }
    \vspace{-5mm}
    \caption{An example of the graph-dilution sequence \(\{\mathcal{T}^{(0)}, \mathcal{T}^{(1)}, \mathcal{T}^{(2)}\}\) for a distance-\(5\) surface code. The sparsification pattern is the diagonal pattern \(D_V^s\).}
    \vspace{0 mm}
    \label{fig: dilution_sequence}
\end{figure}
\noindent
In this section, given a distance-$d$ surface code and a sparsification pattern ($D_V$, $D_H$, $C_V$ or $C_H$), we show how the dilution method constructs the corresponding graph-dilution sequence $\{\mathcal{T}^{(k)}\}_{k=0}^{K}$. Furthermore, Proposition~\ref{prop:symm} and Theorem~\ref{tm:distance} also suggest a natural guideline for choosing the sparsification pattern under different noise models: diagonal patterns, corresponding to the $D_V$- or $D_H$-graph-dilution sequence, are better suited for quaternary MP decoding under balanced or correlated $X$-$Z$ noise such as depolarizing noise; Cartesian patterns, corresponding to the $C_V$- or $C_H$-graph-dilution sequence, are more appropriate for single-type or highly biased $X/Z$ noise.

\section{Algorithms and Numerical Results}
\label{sec:algnum}

After introducing the construction of the graph-dilution sequence, we now describe the \emph{dilution method} for a general class of quaternary message-passing decoders. The algorithmic core is simple and concrete: rather than running message passing directly on the original decoding graph, we run it on the progressively diluted Tanner graphs defined in the graph-dilution sequence, using
\(I_k = 20k+20\) iterations at stage \(k\). Between successive dilution stages, we perform a global hard-decimation step: any qubit removed by the next dilution stage is frozen to $\{I, X, Y, Z\}$ according to its current effective field $h^{(t)}_{i,X}$ and $h^{(t)}_{i,Z}$ estimated by the quaternary message passing. The specific graph-dilution sequence depends on the noise model: $C^s_H$- and $C^s_V$-dilution sequence for single $X$-noise, whereas $D^s_H$- and $D^s_V$-dilution sequence for depolarizing noise. In both cases, the algorithmic structure remains unchanged.

\begin{algorithm}[htbp] 
\caption{Quaternary Message-Passing under Dilution}
\label{alg:bp_decoding}
\KwIn{Syndrome \(\underline{\sigma}\); surface-code distance $d$; chosen sparsification pattern (\(D_V\), \(D_H\), \(C_V\), or \(C_H\))}
Construct the corresponding graph dilution sequence \(\{\mathcal{T}^{(k)}\}_{k=0}^{K(d)}\)\;
\For{$i = 0,1,\dots,K-1$}{
Reset and run quaternary MP on diluted Tanner graph \(\mathcal{T}^{(k)}\) for $I_k=20k+20$ iterations\;
Decimate and freeze the qubits supported on \(\mathcal{T}^{s_i} \backslash \mathcal{T}^{s_{i+1}}\)\;
}
\KwOut{Final hard decision in $\{I,X,Y,Z\}^n$ upon convergence}
\end{algorithm}
\noindent

\subsection{Simulation Results on the Surface-Code family}

Since Min-Sum (MS) is more numerically stable than Sum-Product BP, we use quaternary MS decoding with fixed memory factor $\epsilon$ in practical simulation. We evaluate the method under two representative noise models. For the single-$X$ noise model, each qubit undergoes an independent $X$ error with probability $p$, yielding the uniform prior
$
p_{E_i}(e_i) = [1-p,\; p,\; 0,\; 0].
$
For the depolarizing noise model, each qubit experiences an $X$, $Y$, or $Z$
error with equal probability $p/3$, giving the prior
$
p_{E_i}(e_i) = [1-p,\; p/3,\; p/3,\; p/3].
$
In both cases, the error prior is set according to the physical model,
$\psi_i(e_i) = p_{E_i}(e_i)$.

We now turn to the numerical results for decoding performance under depolarizing noise, shown in Fig.~\ref{fig:bp4}. In the finite-length regime $d<20$, the quaternary Min-Sum decoder under dilution outperforms MWPM, as shown in the right panel of Fig.~\ref{fig:bp4}. The left panel shows an apparent threshold of approximately 16\% over the range of distances. However, both the finite-length advantage and the apparent threshold weaken as the system size increases. In particular, at distance $d=33$, the logical error rate no longer decreases and instead begins to increase, as shown in the left panel of Fig.~\ref{fig:bp4}. 
\begin{figure}[H]
\centering
\begin{tikzpicture}
\definecolor{darkgray176}{RGB}{176,176,176}
\definecolor{lightgray204}{RGB}{204,204,204}
\definecolor{color1}{rgb}{0.0000,0.4470,0.7410}
\definecolor{color2}{rgb}{0.8500,0.3250,0.0980}
\definecolor{color3}{rgb}{0.9290,0.6940,0.1250}
\definecolor{color4}{rgb}{0.4940,0.1840,0.5560}
\definecolor{color5}{rgb}{0.4660,0.6740,0.1880}
\definecolor{color6}{rgb}{0.3010,0.7450,0.9330}
\definecolor{color7}{rgb}{0.6350,0.0780,0.1840}
\definecolor{color8}{rgb}{0.8350,0.0880,0.1840}

\begin{axis}[
width=0.48\linewidth,
height=0.42\linewidth,
legend cell align={left},
legend columns = 1,
legend style={
  fill opacity=0.7,
  draw opacity=1,
  text opacity=1,
  at={(0.85,-0.30)},
  anchor=south east,
  draw=lightgray204,
  font=\scriptsize
},
grid=both,
ticklabel style = {font=\footnotesize},
label style = {font=\footnotesize},
x grid style={darkgray176},
xtick={0.01,0.02,0.03,0.04,0.05,0.06,0.07,0.08,0.09,0.10,0.11,0.12,0.13,0.14,0.15,0.16,0.17},
xticklabels={,0.02,,0.04,,0.06,,0.08,,0.10,,0.12,,0.14,,0.16,},
xmin=0.01,xmax=0.17,
ymin=1e-5,ymax=0.4,
ymode=log,
y grid style={darkgray176},
ylabel={Total Error Rate},
  ylabel style={
    at={(axis description cs: 0.05, 0.25)},
    anchor=south,
  },
yscale=0.77
],
yscale=0.77
]

\addplot[thick, color3, mark=*, 
mark options={scale=1}, smooth] 
table[x=p, y expr = \thisrow{errors}/10000] {
p errors
0.01  6
0.02  34
0.03  91
0.04  179
0.05  301
0.06  407
0.07  554
0.08  744
0.09  925
0.1   1109
0.11  1344
0.12  1527
0.13  1772
0.14  1980
0.15  2307
0.16  2576
0.17  2700
};
\addlegendentry{d=3}

9
54
91
179
301
407
554
744
925
1109
1344
1527
1772
1980
2307
2576
2700


\addplot[thick, color2, mark=*, 
mark options={draw=color2, scale=1}, smooth] 
table[x=p, y expr = \thisrow{errors}/10000] {
p errors
0.01  0.1
0.02  2.8
0.03  12.7
0.04  30.6
0.05  68.6
0.06  113.2
0.07  188.9
0.08  291
0.09  428
0.1   579
0.11  755
0.12  976
0.13  1243
0.14  1557
0.15  2179
0.16  2438
0.17  2715
};
\addlegendentry{d=5}

0.1
0.7
4
11
34
62
101
197
311
530
731
954
1243
1557
2279
2538
2715

\addplot[thick, color1, mark=*, 
mark options={draw=color1, scale=1}, smooth] 
table[x=p, y expr = \thisrow{errors}/10000] {
p errors
0.01  0
0.02  0
0.03  0.4
0.04  2
0.05  6
0.06  14
0.07  42
0.08  72
0.09  120
0.1   243
0.11  422
0.12  639
0.13  897
0.14  1226
0.15  1717
0.16  2497
0.17  3097
};
\addlegendentry{d=9}


0
0
0.2
1.5
7.1
18.7
38
79
162
353
530
794
1019
1495
2156
2597
3097


\addplot[thick, color5, mark=*, 
mark options={draw=color5, scale=1}, smooth] 
table[x=p, y expr = \thisrow{errors}/10000] {
p errors
0.01  0
0.02  0
0.03  0
0.04  0
0.05  0
0.06  1
0.07  6
0.08  25
0.09  52
0.1   102
0.11  234
0.12  408
0.13  741
0.14  1127
0.15  1840
0.16  2510
0.17  3228
};
\addlegendentry{d=17}

0
0
0
0
0.5
2.2
8
23
69
118
256
451
803
1187
1729
2455
3128

\addplot[thick, color7, mark=*, 
mark options={draw=color7, scale=1}, smooth] 
table[x=p, y expr = \thisrow{errors}/10000] {
p errors
0.01  0
0.02  0
0.03  0
0.04  0
0.05  0
0.06  0
0.07  1.4
0.08  8
0.09  31
0.1   81
0.11  192
0.12  362
0.13  706
0.14  1135
0.15  1703
0.16  2443
0.17  3491
};
\addlegendentry{d=19}

0
0
0
0
0
0
2
15
30
82
201
373
717
1156
1715
2484
3491



\addplot[black, line width=1.5pt, dash pattern=on 3pt off 3pt, mark=o, mark options={draw=black, scale=1}, smooth] 
table[x=p, y expr = \thisrow{errors}/10000] {
p errors

0.09  500
0.1   1200
0.11  1500
0.12  1800
0.13  2200
0.14  2700
0.15  3100
0.16  5200
0.17  6800
};
\addlegendentry{d=33}

\addplot+[
  dashed,
  red,
  mark options={scale=0.25},
  line width=1pt 
] coordinates {
  (0.16,1e-5)
  (0.16,1)
};

\end{axis}
\end{tikzpicture}
\hspace{2mm}
\begin{tikzpicture}
\definecolor{darkgray176}{RGB}{176,176,176}
\definecolor{lightgray204}{RGB}{204,204,204}
\definecolor{color1}{rgb}{0.0000,0.4470,0.7410}
\definecolor{color2}{rgb}{0.8500,0.3250,0.0980}
\definecolor{color3}{rgb}{0.9290,0.6940,0.1250}
\definecolor{color4}{rgb}{0.4940,0.1840,0.5560}
\definecolor{color5}{rgb}{0.4660,0.6740,0.1880}
\definecolor{color6}{rgb}{0.3010,0.7450,0.9330}
\definecolor{color7}{rgb}{0.6350,0.0780,0.1840}
\definecolor{color8}{rgb}{0.8350,0.0880,0.1840}

\begin{axis}[
width=0.48\linewidth,
height=0.42\linewidth,
legend cell align={left},
legend columns = 1,
legend style={
  fill opacity=0.7,
  draw opacity=1,
  text opacity=1,
  at={(1,-0.35)},
  anchor=south east,
  draw=lightgray204,
  font=\scriptsize
},
grid=both,
ticklabel style = {font=\footnotesize},
label style = {font=\footnotesize},
x grid style={darkgray176},
xtick={0.01,0.02,0.03,0.04,0.05,0.06,0.07,0.08,0.09,0.10,0.11,0.12,0.13,0.14,0.15, 0.16,0.17},
xticklabels={,0.02,,0.04,,0.06,,0.08,,0.10,,0.12,,0.14,,0.16,},
xmin=0.03,xmax=0.15,
ymin=1e-5,ymax=0.2,
ymode=log,
y grid style={darkgray176},
yscale=0.77
]



\addplot[thick, dashed, color1, mark=o, 
mark options={scale=1, solid}, smooth] 
table[x=p, y expr = \thisrow{errors}/10000] {
p errors
0.01  0
0.02  0
0.03  1
0.04  4
0.05  20
0.06  52
0.07  102
0.08  185
0.09  377
0.1   579
0.11  897
0.12  1232
0.13  1660
0.14  2112
0.15  2550
0.16  3096
0.17  3568
};
\addlegendentry{d=9, MWPM}

\addplot[thick, color1, mark=*, 
mark options={draw=color1, scale=1}, smooth] 
table[x=p, y expr = \thisrow{errors}/10000] {
p errors
0.01  0
0.02  0
0.03  0.4
0.04  2
0.05  6
0.06  14
0.07  42
0.08  72
0.09  120
0.1   243
0.11  422
0.12  639
0.13  897
0.14  1226
0.15  1717
0.16  2497
0.17  3097
};
\addlegendentry{d=9, MS-Dilution}

\addplot[thick, dashed, color7, mark=o, 
mark options={scale=1, solid}, smooth] 
table[x=p, y expr = \thisrow{errors}/10000] {
p errors
0.01  0
0.02  0
0.03  0
0.04  0.1
0.05  1
0.06  4.2
0.07  15
0.08  41
0.09  98
0.1   225
0.11  439
0.12  765
0.13  1192
0.14  1748
0.15  2404
0.16  3169
0.17  3896
};
\addlegendentry{d=15, MWPM}

\addplot[thick, color7, mark=*, 
mark options={draw=color7, scale=1}, smooth] 
table[x=p, y expr = \thisrow{errors}/10000] {
p errors
0.01  0
0.02  0
0.03  0
0.04  0
0.05  0.4
0.06  2.2
0.07  8
0.08  25
0.09  61
0.1   132
0.11  256
0.12  451
0.13  803
0.14  1187
0.15  1729
0.16  2455
0.17  3128
};
\addlegendentry{d=15, MS-Dilution}


0
0
0
0.3
1.8
9
24
67
149
284
523
912
1364
1998
3006
3454
4212

0
0.1
2
21
111
488
1555
3876
7680
13197

\end{axis}
\end{tikzpicture}
\hspace{2mm}
\caption{Performance of quaternary Min-Sum decoding with dilution under depolarizing noise (left), together with a finite-length comparison against the standard global decoder MWPM at distances $9$ and $15$ (right). For MS under dilution, the damping factor is \(\epsilon=0.15\) for all distances.}
\label{fig:bp4}
\end{figure}
\vspace{-1mm}
\noindent
Under $X$-noise, the situation is different, as shown in Fig.~\ref{fig:bp2}. Message-passing algorithms like BP and MS have previously been applied to the surface codes, but existing studies have provided limited evidence for a true asymptotic threshold.
Figure~\ref{fig:bp2} shows that Min-Sum decoding under dilution continues to
perform well up to \(d=65\). The curves exhibit a clear threshold near \(9\%\)
with no visible finite-size degradation. To our knowledge, this provides the
first numerical evidence that a simple Min-Sum decoder can sustain threshold
behavior on surface codes at large system sizes, e.g., \(d=65\) with
\(N\approx 8300\), while retaining the complexity advantages of local
message-passing algorithms; see Sec.~\ref{sec:icc}. These results give strong
evidence that dilution can make message passing reliable across the
surface-code family.
The finite-length advantage is also visible in the right panel of Fig.~\ref{fig:bp2}. Since the dilution method makes message passing scalable, Min-Sum under dilution outperforms the global decoder MS-OSD at large system sizes,
for example, at \(d=65\).
\vspace{-1mm}
\begin{figure}[H]
\centering
\begin{tikzpicture}
\definecolor{darkgray176}{RGB}{176,176,176}
\definecolor{lightgray204}{RGB}{204,204,204}
\definecolor{color1}{rgb}{0.0000,0.4470,0.7410}
\definecolor{color2}{rgb}{0.8500,0.3250,0.0980}
\definecolor{color3}{rgb}{0.9290,0.6940,0.1250}
\definecolor{color4}{rgb}{0.4940,0.1840,0.5560}
\definecolor{color5}{rgb}{0.4660,0.6740,0.1880}
\definecolor{color6}{rgb}{0.3010,0.7450,0.9330}
\definecolor{color7}{rgb}{0.6350,0.0780,0.1840}
\definecolor{color8}{rgb}{0.8350,0.0880,0.1840}

\begin{axis}[
width=0.48\linewidth,
height=0.42\linewidth,
legend cell align={left},
legend columns = 1,
legend style={
  fill opacity=0.7,
  draw opacity=1,
  text opacity=1,
  at={(0.88,-0.35)},
  anchor=south east,
  draw=lightgray204,
  font=\scriptsize
},
grid=both,
ticklabel style = {font=\footnotesize},
label style = {font=\footnotesize},
x grid style={darkgray176},
xtick={0.01, 0.02, 0.03, 0.04, 0.05, 0.06, 0.07, 0.08, 0.09, 0.10},
xticklabels={0.01,,0.03,,0.05,,0.07,,0.09},
xmin=0.01,xmax=0.10,
ymin=1e-5,ymax=0.25,
ymode=log,
y grid style={darkgray176},
ylabel={Total Error Rate},
  ylabel style={
    at={(axis description cs: 0.05, 0.25)},
    anchor=south,
  },
yscale=0.77
]

\addplot[thick, color1, mark=*, 
mark options={scale=1}, smooth] 
table[x=p, y expr = \thisrow{errors}/100000] {
p errors
0.01  245
0.02  999
0.03  2087
0.04  3330
0.05  5034
0.06  7099
0.07  9096
0.08  11222
0.09  13252
0.1   15901
};
\addlegendentry{d=3}

245
999
2087
3330
5034
7099
9096
11222
13252
15901

\addplot[thick, color2, mark=*, 
mark options={scale=1}, smooth] 
table[x=p, y expr = \thisrow{errors}/100000] {
p errors
0.01  38
0.02  235
0.03  757
0.04  1669
0.05  2955
0.06  4886
0.07  7043
0.08  9449
0.09  12080
0.1   15412
};
\addlegendentry{d=5}

\addplot[thick, color3, mark=*, 
mark options={scale=1}, smooth] 
table[x=p, y expr = \thisrow{errors}/100000] {
p errors
0.01  0
0.02  9
0.03  100
0.04  341
0.05  1062
0.06  2382
0.07  4390
0.08  7159
0.09  11836
0.1   15394
};
\addlegendentry{d=9}


\addplot[thick, blue, mark=*, 
mark options={ scale=1}, smooth] 
table[x=p, y expr = \thisrow{errors}/100000] {
p errors
0.01  0
0.02  0
0.03  11
0.04  91
0.05  447
0.06  1296
0.07  3500
0.08  7027
0.09  12716
0.1   19283
};
\addlegendentry{d=17}



0
0
0
8
88
493
1824
5278
11343
23390


\addplot[thick, color5, mark=*, 
mark options={ scale=1}, smooth] 
table[x=p, y expr = \thisrow{errors}/100000] {
p errors
0.01  0
0.02  0
0.03  0
0.04  0
0.05  35
0.06  225
0.07  1174
0.08  4320
0.09  11588
0.1   21260
};
\addlegendentry{d=33}

\addplot[thick, color7, mark=*, 
mark options={scale=1}, smooth] 
table[x=p, y expr = \thisrow{errors}/100000] {
p errors
0.01  0
0.02  0
0.03  0
0.04  0
0.05  0
0.06  13
0.07  380
0.08  3120
0.09  12250
0.1   23000
};
\addlegendentry{d=65}

\addplot+[
  dashed,
  red,
  mark options={scale=0.25},
  line width=1pt 
] coordinates {
  (0.090,1e-6)
  (0.090,1)
};

\end{axis}
\end{tikzpicture}
\hspace{2mm}
\begin{tikzpicture}
\definecolor{darkgray176}{RGB}{176,176,176}
\definecolor{lightgray204}{RGB}{204,204,204}
\definecolor{color1}{rgb}{0.0000,0.4470,0.7410}
\definecolor{color2}{rgb}{0.8500,0.3250,0.0980}
\definecolor{color3}{rgb}{0.9290,0.6940,0.1250}
\definecolor{color4}{rgb}{0.4940,0.1840,0.5560}
\definecolor{color5}{rgb}{0.4660,0.6740,0.1880}
\definecolor{color6}{rgb}{0.3010,0.7450,0.9330}
\definecolor{color7}{rgb}{0.6350,0.0780,0.1840}
\definecolor{color8}{rgb}{0.8350,0.0880,0.1840}

\begin{axis}[
width=0.48\linewidth,
height=0.42\linewidth,
scaled x ticks=false,
legend cell align={left},
legend columns = 1,
legend style={
  fill opacity=0.7,
  draw opacity=1,
  text opacity=1,
  at={(1,-0.35)},
  anchor=south east,
  draw=lightgray204,
  font=\scriptsize
},
grid=both,
ticklabel style = {font=\footnotesize},
label style = {font=\footnotesize},
x grid style={darkgray176},
xtick={0.02, 0.03, 0.04, 0.05, 0.06, 0.07, 0.08, 0.09},
xticklabels={0.02,0.03,0.04,0.05,0.06,0.07,0.08,0.09},
xmin=0.05,xmax=0.09,
ymin=1e-5,ymax=1e-1,
ymode=log,
y grid style={darkgray176},
yscale=0.77
]

\addplot[thick, color5, dashed, mark=o, 
mark options={scale=1, solid}] 
table[x=p, y expr = \thisrow{errors}/100000] {
p errors
0.03  0
0.04  0
0.05  5
0.06  120
0.07  920
0.08  3290
0.09  9280
};
\addlegendentry{d=33, MS-OSD}

\addplot[thick, color5, mark=*, 
mark options={ scale=1}, smooth] 
table[x=p, y expr = \thisrow{errors}/100000] {
p errors
0.01  0
0.02  0
0.03  0
0.04  0
0.05  35
0.06  225
0.07  1174
0.08  4320
0.09  11588
0.1   21260
};
\addlegendentry{d=33, MS-Dilution}

\addplot[thick, color7, dashed, mark=o, 
mark options={scale=1.0, solid}, smooth] 
table[x=p, y expr = \thisrow{errors}/100000] {
p errors
0.03  0
0.04  0
0.05  0
0.06  50
0.07  590
0.08  3470
0.09  12700
};
\addlegendentry{d=65, MS-OSD}

\addplot[thick, color7, mark=*, 
mark options={scale=1}, smooth] 
table[x=p, y expr = \thisrow{errors}/100000] {
p errors
0.01  0
0.02  0
0.03  0
0.04  0
0.05  0
0.06  13
0.07  380
0.08  3120
0.09  12250
0.1   23000
};
\addlegendentry{d=65, MS-Dilution}

0
0
0
3
37
237
1190
3460
8630
14617


0
0.3
4
39
180
661
2028
4653
8943
14617

\end{axis}
\end{tikzpicture}
\hspace{2mm}
\caption{Performance of Min-Sum decoding with dilution under $X$-noise (left), together with a finite-length comparison against the standard global decoders MS-OSD under $X$-noise at distances $33$ and $65$ (right). For MS-OSD, the number of Min-Sum iterations is set to \(30\). For MS under dilution, the damping factor is \(\epsilon=0.1\) for \(d=33\), \(\epsilon=0.05\) for \(d=65\), and \(\epsilon=0.15\) for all other distances.}
\label{fig:bp2}
\end{figure}
\vspace{-1mm}

\vspace{-1mm}
\subsection{Complexity and Convergence Discussion}
\label{sec:icc}
\noindent
In the classical literature, BP on sparse random graphs typically converges in $O(\log N)$ iterations~\cite{richardson2008modern}. In this work, we choose \(I_k=20k+20\). Hence the total number of message-passing iterations over all stages is
$$
    I_{\max}
    =
    \sum_{k=0}^{K} I_k
    =
    \sum_{k=0}^{\lfloor \log_2 d \rfloor} (20k+20)
    =
    10(K+1)(K+2)=O(\log^2 d),
$$
where \(K=\lfloor \log_2 d \rfloor\). The resulting worst-case complexity is
$
    O(N \cdot I_{\max}) = O(N \cdot \log^2 d).
$
In the average-case scenario, the number of iterations is often much smaller, especially in the low-error-rate regime, as shown in the left panel of Fig.~\ref{fig:complexity}. Thus, the average-case complexity remains close to
linear in \(N\).

\begin{figure}[H]
\centering
\begin{tikzpicture}
\definecolor{darkgray176}{RGB}{176,176,176}
\definecolor{color1}{rgb}{0.0000,0.4470,0.7410}

\begin{axis}[
  name=main,
  width=0.48\linewidth,
  height=0.35\linewidth,
  xmin=9,xmax=65,
  ymin=10,ymax=150,
  xlabel={Code Distance $d$},
  ylabel={Number of Iterations $I$},
  ylabel style={
    at={(axis description cs: 0.08, 0.5)},
    anchor=south,
  },
  grid=both,
  x grid style={darkgray176},
  y grid style={darkgray176},
  ticklabel style={font=\footnotesize},
  label style={font=\footnotesize},
  xtick={10, 20, 30, 40, 50, 60, 70},
  xticklabels={10, 20, 30,  40, 50, 60, 70},
  ytick={10, 50, 100, 150},
  legend cell align={left},
  legend style={
    fill opacity=0.7,
    draw opacity=1,
    text opacity=1,
    at={(0.45,0.35)},
    anchor=north west,
    font=\scriptsize
  },
]

\addplot[thick, dashed, red, mark=o,
  mark options={draw=red, scale=1.1, solid}, smooth]
table[x=d, y=num_of_iter] {
d  num_of_iter
9  12.0
17 13.5
33 15.8
65 17.5
};
\addlegendentry{Typical $I$ (p=0.03)}

\addplot[thick, gray, mark=*,
  mark options={draw=gray, scale=1.0}, smooth]
table[x=d, y=num_of_iter] {
d  num_of_iter
9  18.0
17 49
33 72.0
65 98.0
};
\addlegendentry{Typical $I$ (p=0.06)}

\end{axis}

\begin{axis}[
  width=0.24\linewidth,
  height=0.20\linewidth,
  at={(main.south east)},          
  xshift=-3.2cm, yshift=2.28cm,   
  anchor=south east,
  xmin=3,xmax=65,
  ymin=20,ymax=500,
  grid=both,
  x grid style={darkgray176},
  y grid style={darkgray176},
  ticklabel style={font=\scriptsize},
  label style={font=\scriptsize},
  xtick={3, 23, 43, 63},
  ytick={20,220,420},
  legend cell align={left},
  legend style={
    fill opacity=0.7,
    draw opacity=1,
    text opacity=1,
    at={(0.00,1)},
    anchor=north west,
    font=\scriptsize
  },
]

\addplot[thick, black, mark=*,
  mark options={draw=black, scale=0.8}, smooth]
table[x=d, y=num_of_iter] {
d  num_of_iter
3  20
5  60
9  120
17 200
33 300
65 420
};
\addlegendentry{Max $I$}

\end{axis}

\end{tikzpicture}
\hspace{2mm}
\begin{tikzpicture}
\definecolor{darkgray176}{RGB}{176,176,176}
\definecolor{lightgray204}{RGB}{204,204,204}
\definecolor{color1}{rgb}{0.0000,0.4470,0.7410}
\definecolor{color2}{rgb}{0.8500,0.3250,0.0980}
\definecolor{color3}{rgb}{0.9290,0.6940,0.1250}
\definecolor{color4}{rgb}{0.4940,0.1840,0.5560}
\definecolor{color5}{rgb}{0.4660,0.6740,0.1880}
\definecolor{color6}{rgb}{0.3010,0.7450,0.9330}
\definecolor{color7}{rgb}{0.6350,0.0780,0.1840}

\begin{axis}[
width=0.48\linewidth,
height=0.42\linewidth,
legend cell align={left},
legend columns = 1,
legend style={
  fill opacity=0.7,
  draw opacity=1,
  text opacity=1,
  at={(1,-0.18)},
  anchor=south east,
  draw=lightgray204,
  font=\scriptsize
},
grid=both,
ticklabel style = {font=\footnotesize},
label style = {font=\footnotesize},
x grid style={darkgray176},
xtick={0.01,0.02,0.03,0.04,0.05,0.06,0.07,0.08,0.09,0.10},
xticklabels={0.01,0.02,0.03,0.04,0.05,0.06,0.07,0.08,0.09,0.10},
xmin=0.01,xmax=0.10,
ymin=1e-4,ymax=1.0,
xlabel={Depolarizing error rate $p$ (d=17)},
xlabel style={
  at={(axis description cs:0.5,-0.3)},
  anchor=north
},
ymode=log,
y grid style={darkgray176},
ylabel={Non-Convergence Rate},
ylabel style={
  at={(axis description cs:0.05,0.40)},
  anchor=south
},
yscale=0.77
]

\addplot[thick, dashed, black, mark=o, 
mark options={draw=black, scale=1, solid}, smooth] 
table[x=p, y expr = \thisrow{errors}/10000] {
p errors
0.01 2775
0.02 7156
0.03 9315
0.04 9896
0.05 9988
0.06 10000
0.07 10000
0.08 10000
0.09 10000
0.1  10000
};
\addlegendentry{No Dilution, $\epsilon=0$}

\addplot[thick, black, mark=*, 
mark options={draw=black, scale=1.0}, smooth] 
table[x=p, y expr = \thisrow{errors}/10000] {
p errors
0.01 1367
0.02 3908
0.03 6137
0.04 7644
0.05 8562
0.06 9131
0.07 9347
0.08 9777
0.09 9897
0.1  9936
};
\addlegendentry{No Dilution, $\epsilon=0.15$}

\addplot[thick, dashed, color7, mark=o, 
mark options={draw=color7, solid, scale=1}, smooth] 
table[x=p, y expr = \thisrow{errors}/10000] {
p errors
0.01 0
0.02 1
0.03 11
0.04 68
0.05 236
0.06 598
0.07 984
0.08 1710
0.09 2510
0.1  3050
};
\addlegendentry{Dilution, $\epsilon=0$}

1367
3908
6137
7644
8562
9131
9347
9777
9897
9936

\addplot[thick, color7, mark=*, 
mark options={draw=color7, scale=1}, smooth] 
table[x=p, y expr = \thisrow{errors}/100000] {
p errors
0.01 0
0.02 0
0.03 0
0.04 0
0.05 11
0.06 13
0.07 15
0.08 21
0.09 20
0.1  21
};
\addlegendentry{Dilution, $\epsilon=0.15$}

0
0
0
4
4
5
3
3
5
4

1
2
22
79
256
600
900
2100
3110
3050

2775
7156
9315
9896
9988

\end{axis}
\end{tikzpicture}
\vspace{-2mm}
\caption{
Left: Average decoding complexity, measured by the median number of iterations
until convergence over \(10^4\) trials. The inset shows the maximum iteration
budget \(I_{\max}\) as a function of code distance. Right: Non-convergence rate
with and without dilution and damping, where \(\epsilon\) denotes the damping factor.
}
\label{fig:complexity}
\end{figure}
\vspace{-3mm}
\noindent
Next, we examine numerically how dilution affects the convergence behavior of the Min-Sum decoder. As shown in the right panel of Fig.~\ref{fig:complexity}, naive Min-Sum decoding on the original Tanner graph essentially fails to converge. Introducing a damping factor provides only marginal improvement and does not fundamentally resolve the issue. In contrast, combining dilution with damping robustly restores convergence for $d=17$. 
Fig.~\ref{fig:hist_convergence} shows histograms of the number of iterations until convergence at an error rate near threshold for distances $d=9$ (left) and $d=65$ (right). The histograms exhibit multiple peaks, each corresponding to
convergence at a different dilution stage. This indicates that successive
dilution stages unlock additional sets of instances for which Min-Sum
converges. For the smaller code, \(d=9\), naive Min-Sum can still converge, but
typically much more slowly; dilution substantially accelerates convergence.
For the larger code, \(d=65\), the baseline decoder essentially fails. In this
regime, dilution is not merely an acceleration mechanism but appears to be the
decisive ingredient enabling convergence at all. Interestingly, for \(d=65\), most convergent instances are resolved only at the later dilution stages. We discuss the origin of this behavior in Sec.~\ref{sec:analysis}.
\vspace{-0mm}
\begin{figure}[H]
  \centering
  \begin{subfigure}{0.49\textwidth}
    \centering
    \includegraphics[width=\linewidth]{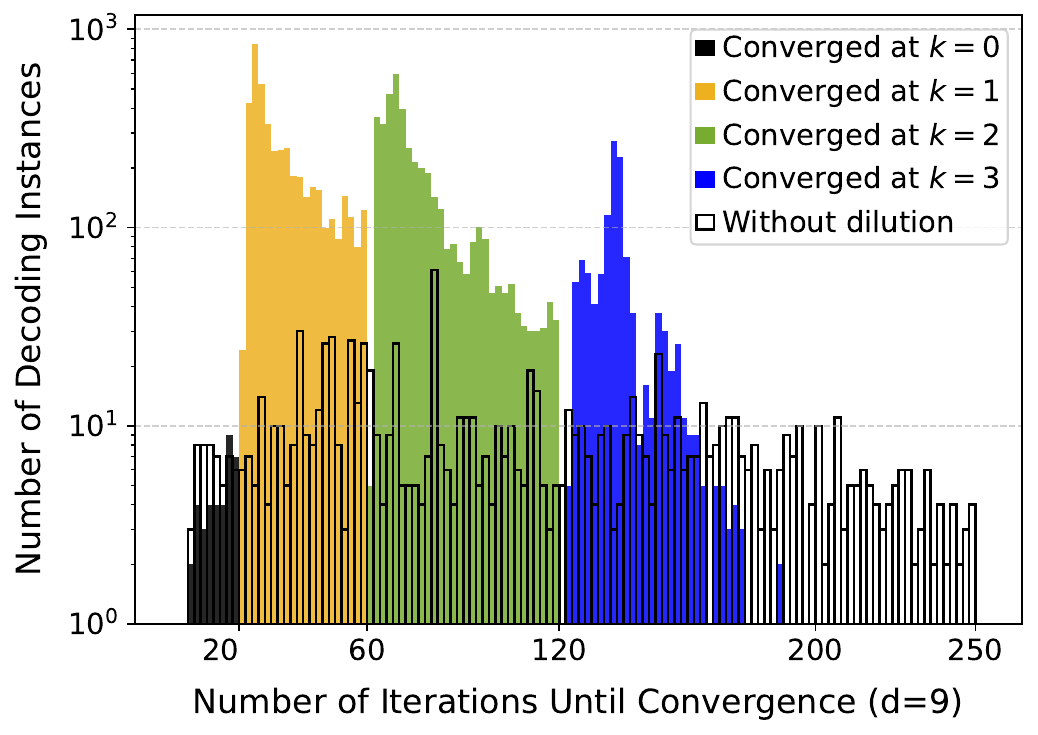}
  \end{subfigure}
  \hfill
  \begin{subfigure}{0.49\textwidth}
    \centering
    \includegraphics[width=\linewidth]{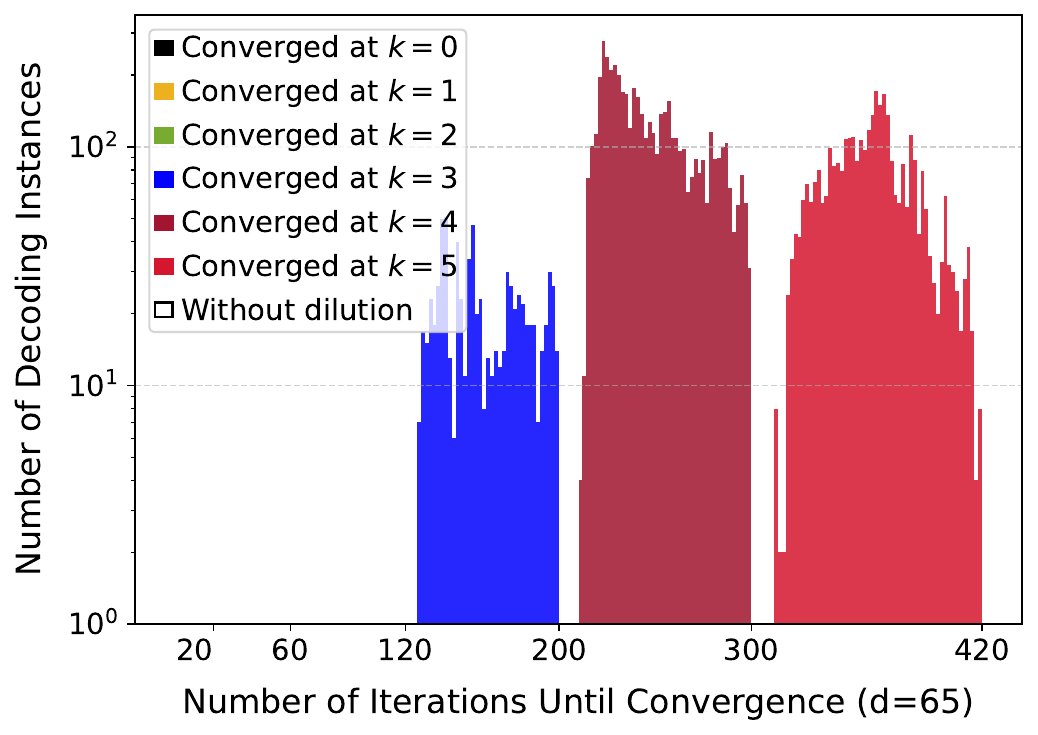}
  \end{subfigure}
  \vspace{-2mm}
  \caption{Histograms of the number of iterations until convergence for the $d=9$ (left) and $d=65$ (right) surface code at threshold $9\%$. The black line shows the baseline without dilution.}
  \label{fig:hist_convergence}
\end{figure}

\section{Interpretation of Dilution Method}
\label{sec:analysis}

While many practical modifications have improved MP decoders for quantum LDPC codes, their behavior remains poorly understood. In this section, we use the dilution method as a lens to first revisit the common view that BP is limited by stabilizer degeneracy and the unavoidable short cycles in Tanner graphs~\cite{poulin2008iterative,babar2015fifteen}. We suggest that this cycle-based perspective is more subtle than it first appears: the correlation structure of the (effective) error distribution also plays a central role. We then use a one-dimensional strip as a toy model to give a heuristic graph-dilution argument for how structure-aware dilution reshapes the effective error distribution across stages, thereby enabling scalable decoding.

\subsection{From Cycle to Correlations: Loops in the Diluted Tanner Graph}
\label{sec: from cycle to correlation}
A major algorithmic obstacle for BP is non-convergence. Conventional wisdom attributes this problem largely to short cycles in the decoding graph. The Tanner graphs of stabilizer codes are known to contain many short loops~\cite{babar2015fifteen}, especially the inevitable mixed-type $4$-cycles formed by different stabilizer types. As shown in Fig.~\ref{fig: diluted tanner} and Fig.~\ref{fig: dilution_sequence}, dilution does not remove all such cycles: each diluted graph still contains many $4$- and $8$-cycles of the types shown on the left of Fig. \ref{fig:qcavity}. From a pure cycle-based viewpoint, one might suspect that the dilution method should not help quaternary BP under depolarizing noise. However, numerical results show that despite the presence of these short cycles, dilution substantially reduces non-convergence up to a distance of $20$. This suggests that the relation between cycles and BP convergence is more subtle than girth alone indicates. In this section, we examine this relation from a more first-principles perspective.

To begin, we recall the implicit independence assumption underlying local message-passing updates. For an \(X\)-type factor node \(a^X \in F\)
(resp. a \(Z\)-type factor node), the factor-to-variable
updates in Eqs.~\ref{eq:f-2-v, bp} and \ref{eq:f-2-v, ms} combine incoming \(Z\)-messages (resp. \(X\)-messages) as if they were independent. From this viewpoint, the effect of cycles is not determined only by their presence, but by the correlations they induce among the incoming messages to a factor node. To make this idea more concrete, consider the short $4$-cycle shown in Fig.~\ref{fig:qcavity}. After two iterations, the incoming messages $\nu_{i \to a^Z}^{(2)}(x_i)$ and $\nu_{j \to a^Z}^{(2)}(x_j)$ correspond to the single-site marginal $\mu^{\backslash a}(x_i \mid \sigma_X)$ and $\mu^{\backslash a}(x_j \mid \sigma_X)$, where $\mu^{\backslash a}(\underline{x} \mid \underline{\sigma})$ is the cavity distribution obtained by removing factor node $a$. Thus, the harmfulness of the cycle can be quantified by the deviation between the joint cavity marginal $\mu^{\backslash a}(\underline{x}_{\partial a^Z} \mid \underline{\sigma})$ and the product of its single-site cavity marginals $\prod_{i \in \partial a^Z} \mu^{\backslash a}(x_i \mid { \sigma_X})$. We denote this deviation by \(\delta(a\mid\underline{\sigma})\) and call it the \emph{cavity discrepancy}. Taking \(\|\cdot\|\) to be total variation distance, define
\begin{equation}
\delta(a^Z \mid \underline{\sigma}) =  \Big\|\, \mu^{\backslash a}(\underline{x}_{\partial a^Z} \mid \underline{\sigma}) - \prod_{i \in \partial a^Z} \mu^{\backslash a}(x_i \mid { \underline{\sigma}}) \,\Big\|\; 
, \; \; \;
\delta(a^X \mid \underline{\sigma}) = \Big\|\, \mu^{\backslash a}(\underline{z}_{\partial a^X} \mid \underline{\sigma}) - \prod_{i \in \partial a^X} \mu^{\backslash a}(z_i \mid { \underline{\sigma}}) \,\Big\|\;.   
\label{eq:uncorrelated}
\end{equation}
\vspace{-3mm}
\begin{figure}[H]
    \centering
    \scalebox{1}{
    \includegraphics[width=1\textwidth]{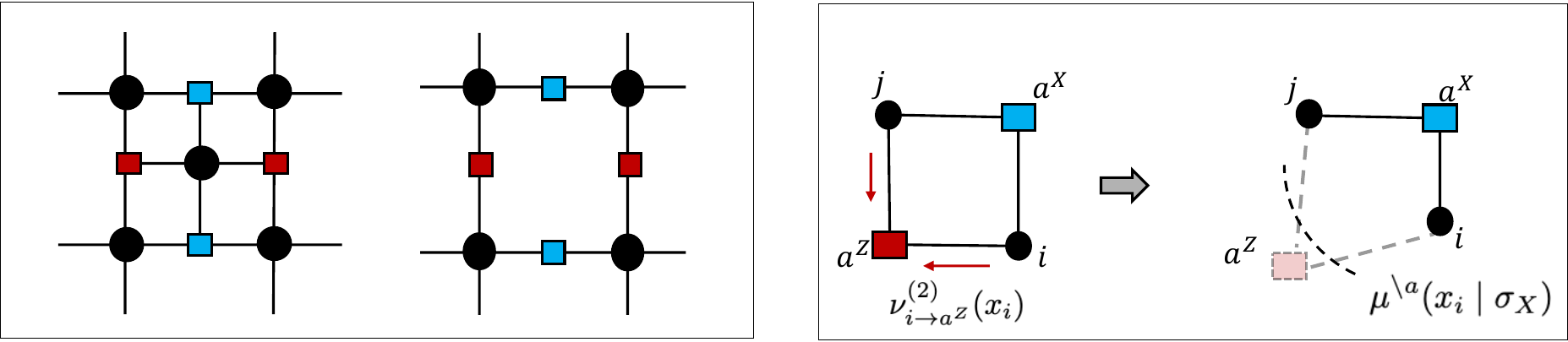} 
    }
    \vspace{-5 mm}
    \caption{Left: Short $4$-cycles and $8$-cycles in the diluted graph formed by different stabilizer types. 
    Right: The cavity picture of the mixed-type $4$-cycle. For this cycle, we are interested in the correlation between the incoming messages \(\nu^{(2)}_{i\to a^Z}(x_i)\) and \(\nu^{(2)}_{j\to a^Z}(x_j)\), shown by the red arrows. After removing the \(Z\)-factor node \(a^Z\), these incoming messages correspond to the cavity marginals \(\mu^{\backslash a}(x_i\mid \sigma_X)\) and \(\mu^{\backslash a}(x_j\mid \sigma_X)\).}
    \vspace{-2 mm}
    \label{fig:qcavity}
\end{figure}
\noindent
To build intuition for the cavity discrepancy, consider a tree factor graph. Removing a factor node disconnects its neighboring variables, so the joint cavity marginal factorizes and \(\delta(a \mid \underline{\sigma})=0\) for every factor node. 
At the opposite extreme, consider a short four-cycle consisting of two variable nodes and two stabilizer factors of the same type. Removing one factor leaves the two variables directly constrained by the other, and hence the cavity discrepancy is essentially maximal. Thus, single-type short cycles are indeed highly harmful.
For mixed-type \(4\)-cycles in Fig.~\ref{fig:qcavity}, however, the cavity discrepancy $\delta(a \mid \underline{\sigma})$ also depends on the coupling between the \(X\)- and \(Z\)-components of the error prior $\psi_i(e_i)=\psi_i(x_i, z_i)$, as illustrated by Proposition~\ref{prop:cd}.
\begin{proposition}
\label{prop:correla}
Consider the mixed-type $4$-cycle consisting of two variable nodes $i$ and $j$ with error prior $\psi_i(x_i, z_i) = \psi_j(x_j, z_j)$, and two factor nodes $a^X$ and $a^Z$. Let 
$
p_i^x := \sum_{z_i}\psi_i(1,z_i), 
\; 
p_i^z := \sum_{x_i}\psi_i(x_i,1),
$
and define the $X$-$Z$ coupling
$
\kappa = \kappa_{\psi_j} =\kappa_{\psi_i}:=\psi_i(1,1)-p_i^x p_i^z.
$  We have
\begin{equation}
\delta( a^Z \mid \sigma_{a^X}=0)
= 
\frac{2(\kappa)^2}{\big((1-p^Z)^2+(p^Z)^2\big)^2}, 
\qquad
\delta( a^Z \mid \sigma_{a^X}=1)
=
\frac{2(\kappa_{\psi_i})^2}{\big(2p^Z(1-p^Z)\big)^2}.
\end{equation}
\noindent
In particular, the correlation vanishes if and only if $\kappa=0$, i.e., when the $X$- and $Z$-components of the single-qubit belief are independent.
\label{prop:cd}
\end{proposition}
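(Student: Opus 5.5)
The plan is a direct computation on the cavity graph, which I expect to be short once a $2\times 2$ determinant identity is in place. Remove $a^Z$ from the mixed-type $4$-cycle. What remains is the two variable nodes $i,j$ with identical priors $\psi$, joined only through the single factor $a^X$, which enforces $z_i\oplus z_j=\sigma_{a^X}$.

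Marginalizing the $z$-components gives the cavity joint over the $X$-components:
\[
\mu^{\backslash a}(x_i,x_j\mid\sigma)\;\propto\; M_\sigma(x_i,x_j):=\sum_{z_i\oplus z_j=\sigma}\psi(x_i,z_i)\,\psi(x_j,z_j).
\]
Let $\Psi$ be the $2\times 2$ matrix with rows indexed by $x$ and columns by $z$, so $\Psi_{xz}=\psi(x,z)$. Then
\[
M_0=\Psi\Psi^{\top}, \qquad M_1=\Psi S\Psi^{\top},
\]
where $S$ is the $2\times2$ swap matrix. The normalizing constants are $Z_\sigma=\mathbf{1}^\top M_\sigma\mathbf{1}$, the sum of all entries of $M_\sigma$. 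Since the column sums of $\Psi$ are $1-p^z$ and $p^z$, we get
\[
Z_0=(1-p^z)^2+(p^z)^2, \qquad Z_1=2p^z(1-p^z).
\]

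Next I will use an elementary fact about $2\times2$ distributions. For any joint distribution $P$ on $\{0,1\}^2$ with marginals $p,q$, the difference $P-p\otimes q$ has all four entries equal to $\pm\det P$, with signs $+$ on the diagonal and $-$ off it. This follows by expanding $P_{00}-(P_{00}+P_{01})(P_{00}+P_{10})$ and using that the entries of $P$ sum to one. Consequently the total variation distance equals $2|\det P|$ under the $\tfrac12\ell_1$ convention; this is the convention that reproduces the constant $2$ in the statement. Applying it to $P=M_\sigma/Z_\sigma$ gives
\[
\delta(a^Z\mid\sigma)=\frac{2\,|\det M_\sigma|}{Z_\sigma^2}.
\]

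The remaining step is to compute the determinant, using $\det M_0=(\det\Psi)^2$ and $\det M_1=-(\det\Psi)^2$. I will parametrize $\psi$ by its marginals and the coupling:
\[
\psi(1,1)=p^xp^z+\kappa,\quad \psi(1,0)=p^x(1-p^z)-\kappa,\quad \psi(0,1)=(1-p^x)p^z-\kappa,\quad \psi(0,0)=(1-p^x)(1-p^z)+\kappa .
\]
A one-line expansion then shows $\det\Psi=\kappa$; this is the standard identity that the determinant of a $2\times2$ joint distribution is its covariance. Substituting gives both displayed formulas.

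For the final claim, both denominators are strictly positive whenever $0<p^z<1$, so $\delta$ vanishes exactly when $\kappa=0$. The condition $\kappa=0$ is precisely independence of $x$ and $z$ under $\psi$.

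The main obstacle is not algebraic but a matter of justification. One must check that after two iterations the incoming messages really are the single-site cavity marginals, so that the discrepancy of the exact cavity joint is the right object. This holds because the cavity graph is a tree: the quaternary sum-product messages from Section~\ref{subsec:mp}, with $\phi$ summing out $z$, are exact there. One must also fix the TV normalization so that the constant comes out as $2$.
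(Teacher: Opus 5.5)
Your proof is correct and has the same skeleton as the paper's. Both condition on the parity $z_i\oplus z_j=\sigma_{a^X}$ to get the cavity joint of $(x_i,x_j)$, compute the covariance of that $2\times 2$ law, and convert it to total variation with the factor $2$.

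The difference is how the covariance is obtained. The paper computes $\mathbb P(X_i=1,X_j=1\mid\cdot)$ and $\mathbb P(X_i=1\mid\cdot)$ explicitly. It then expands and recognizes the perfect square $\big(p_{11}(1-p^Z)-p_{10}p^Z\big)^2=\kappa^2$ by hand, separately for each parity. You instead factor the unnormalized cavity joint as $\Psi S^{\sigma}\Psi^{\top}$ and use multiplicativity of the determinant with $\det\Psi=\kappa$. The $\kappa^2$, and the sign $-1$ from $\det S$ when $\sigma=1$, then come out with no expansion.

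Your route buys two further things.
\begin{itemize}
\item It shows that $\|P-p\otimes q\|_{\mathrm{TV}}=2|\det P|$ holds for every $2\times 2$ joint law. The ``identical marginals'' hypothesis the paper invokes at that step is therefore unnecessary.
\item It generalizes at once to unequal priors: $\delta(a^Z\mid\sigma)=2|\kappa_{\psi_i}\kappa_{\psi_j}|/Z_\sigma^2$, where $Z_\sigma=\mathbf 1^{\top}\Psi_i S^{\sigma}\Psi_j^{\top}\mathbf 1$.
\end{itemize}
The paper's explicit computation, for its part, yields the conditional marginals and the signed covariances directly: positive for even syndrome, negative for odd.

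The justification you flag as the main obstacle is not actually something the proposition requires. The cavity discrepancy is defined directly from the cavity distribution, so the statement is a pure computation. The identification of $\nu^{(2)}_{i\to a^Z}$ with the cavity marginal is motivation in the main text. On the loopy graph, that identification holds because two iterations from an uninformative initialization have not yet wrapped around the cycle. The cavity graph being a tree is not by itself the reason.
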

\begin{proof}
    A detailed proof is given in Appendix~\ref{ap:2}.
\end{proof}
\noindent
Proposition~\ref{prop:cd} shows that, for a mixed-type $4$-cycle, the cavity discrepancy scales as $\kappa^2$. Under depolarizing noise, with $\psi_{i}(e_i) =  p_{E_i}(e_i) = (1-p, p/3, p/3, p/3)$, one has $\kappa^2 \approx p^2/9$, which leads to small $\delta$. Thus, when the error rate is not too large, the cavity discrepancy induced by each mixed-type \(4\)-cycle is small. This helps explain why, at finite code lengths and with a bounded number of iterations, the correlations among incoming messages remain controlled, and the BP update equations can still provide a good approximation.

The above discussion adopts a correlation-based perspective, explaining why mixed-type $4$-cycles are much more benign than single-type short cycles for BP under depolarizing noise. Under single-$X$ noise, the implication is that graph geometry alone might not be enough to assess BP effectiveness; what matters is the interplay between the (effective) error statistics and the graph geometry.  This means that locally tree-like structure and large girth are not sufficient on their own: we should also ensure that the number of effective errors inside each cycle is much smaller than half the girth.

\subsection{Interpretation of the Dynamics}

In this section, we provide a heuristic explanation for why structure-aware dilution may enable BP to operate at large system sizes and sustain its threshold. To isolate the main mechanism, we consider the simplified one-dimensional strip with periodic boundary conditions shown on the left of Fig.~\ref{fig:1ddilution}. The relevant logical operator is the non-contractible vertical loop of length \(d\). Throughout this toy model, we consider a single $X$-noise with uniform physical error rate $p$. Similar to the graph dilution sequence in Definition~\ref{def:dilution-sequence}, for a strip of length $d$, let $\mathcal{G}^{(0)}, \mathcal{G}^{(1)}, ..., \mathcal{G}^{(K)}$ be the sequence of diluted strips with $K = \lfloor\log_2 d\rfloor$ and $\mathcal{G}^{(k)} = \mathcal{G}^{s_k}$ ($s_k = 2^k-1$). 
For the BP iterations schedule, we choose $I^{\prime}_k = \min \{2^k,\, 20k+20\}$. This gives $I'_k \leq 2^k < g^{(k)}/2 =2^{k}+1$, so the BP computation tree at each stage
\(k\) is cycle-free. Hence, information cannot wrap around a cycle, and the incoming BP messages remain independent. 
It is clear from Fig.~\ref{fig:1ddilution} that moving from stage $k$ to stage $k+1$ increases the girth of the underlying diluted strip from $g^{(k)}$ to $g^{(k+1)}$. However, as discussed in Sec.~\ref{sec: from cycle to correlation}, large girth alone is not sufficient to guarantee BP convergence or decoding performance. We must also control the effective error $\underline{e}^{(k)}$ (see Definition~\ref{def:effective}) at each stage.
Assuming that the initial physical error has weight less than \(d/2\), a natural sufficient condition for both convergence and correctness is that the effective error weight remains below half the code distance throughout the dilution sequence:
\begin{equation}
    n^{(k)} = |\underline e^{(k)}| < \frac d2,
    \qquad \forall k=0, 1, \ldots,K .
\end{equation}
If this condition holds, the final correction is stabilizer-equivalent
to the initial physical error $\underline{e}^{(0)}$. In the following, we give a heuristic argument that the condition is satisfied for all \(d\) whenever the initial error weight obeys $n^{(0)}=|\underline{e}^{(0)}| < d/4$. 
\begin{figure}[H]
    \centering
    \scalebox{1}{
    \includegraphics[width=1\textwidth]{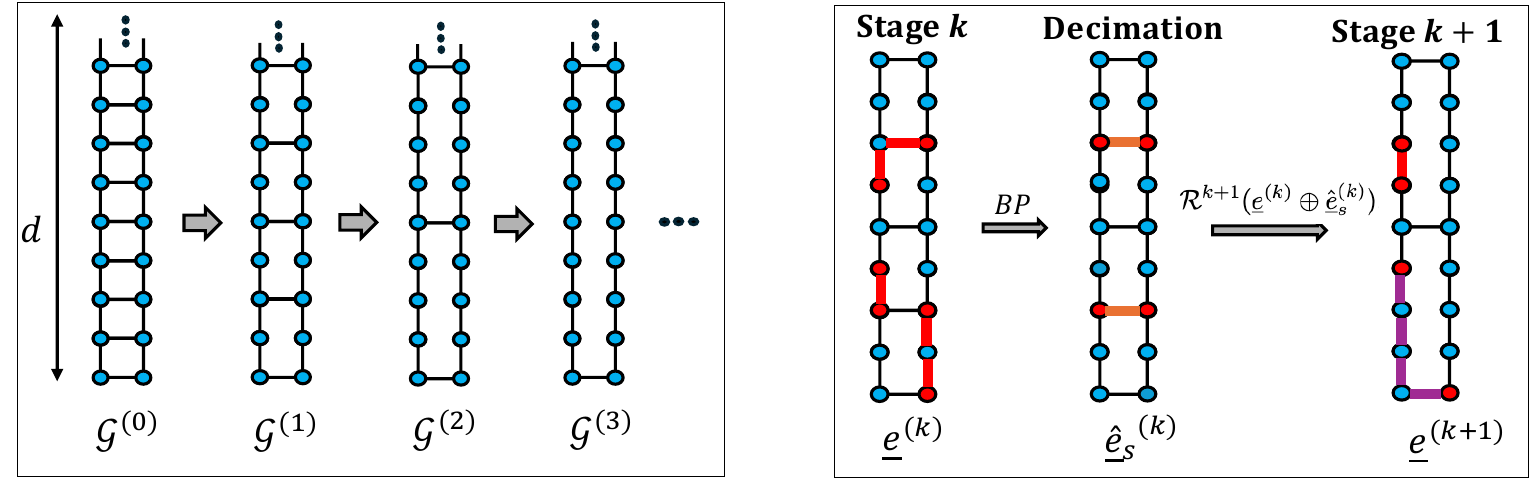} 
    }
    \vspace{-5 mm}
    \caption{Schematic illustration of the dilution process on a one-dimensional strip. Left: The graph-dilution sequence $\mathcal G^{(0)}$, $\mathcal G^{(1)}$, $\mathcal G^{(2)}$, $\ldots$. Right: One step in the evolution of the effective error from stage $k$ to stage $k+1$. BP is first applied on $\mathcal G^{(k)}$ to estimate the errors on the edges that will be removed. After decimation, diluting the corresponding qubits induces a renormalization of the residual error, described by $\mathcal R^{k+1}$, which produces the effective error $\underline{e}^{(k+1)}$ on the diluted strip $\mathcal G^{(k+1)}$.}
    \vspace{-2 mm}
    \label{fig:1ddilution}
\end{figure}
\noindent
To describe how the effective error evolves across dilution stages, start with an effective error $\underline{e}^{(k)}$ on the current strip $\mathcal{G}^{(k)}$. We run BP and decimate the edges removed in the transition from $\mathcal{G}^{(k)}$ to $\mathcal{G}^{(k+1)}$, producing an estimated error $\hat{\underline{e}}_s^{(k)}$ supported on $\mathcal{G}^{(k)} \backslash \mathcal{G}^{(k+1)}$. After decimation, the residual error is $\underline{e}^{(k)} \oplus \hat{\underline{e}}_s^{(k)}$. Since this residue is still defined on $\mathcal{G}^{(k)}$, it is then renormalized to an effective error on the next strip $\mathcal{G}^{(k+1)}$, as shown in the middle of Fig.~\ref{fig:1ddilution}. Thus, BP, decimation, and the global renormalization map $\mathcal{R}$ (see Definition~\ref{def:effective}) together define the stage-$(k+1)$ dilution map, denoted by $\pi^{(k+1)}$:
\begin{equation}
    \underline{e}^{(k+1)} = \pi^{(k+1)}(\underline{e}^{(k)})= \mathcal{R}^{k+1}(\underline{e}^{(k)} \oplus \hat{\underline{e}}_s^{(k)}).
\end{equation}

\noindent
Having formalized the dilution map, we now argue that the effective error weight $n^{(k)}$ does not blow up under successive application of $\pi^{(k)} \circ \cdots \circ \pi^{(1)}$. First, consider the ideal case where every decimation step is correct. Then, at each stage $k$, the effective error $\underline{e}^{(k)}$ is simply the restriction of the initial error $\underline{e}^{(0)}$ to the diluted strip $\mathcal{G}^{(k)}$. Therefore, $n^{(k)} \leq n^{(0)}$. In this best-case scenario, if the initial error satisfies $n^{(0)}<d/2$, then $n^{(k)} < d/2$ for all stages.
In practice, however, wrong decimations are unavoidable, especially at early dilution stages. The key is that wrong decimation does not necessarily cause a large increase in the effective error weight. The intuition is that if the neighborhood of a decimated edge contains only small and well-separated errors, BP is likely to decimate it correctly. Therefore, wrong decimations mainly occur when the local neighborhood already contains large error clusters. In this case, the residue error $\underline{e}^{(k)} \oplus \hat{\underline{e}}^{(k)}$ can often be renormalized, via stabilizer equivalence, into a lower-weight effective error on the next diluted graph, as illustrated by the example in the right of Fig.~\ref{fig:1ddilution}. 

We conduct a numerical experiment to test the above intuition. To describe the experiment, we first introduce \emph{cells} and \emph{blocks}. As shown on the left of Fig.~\ref{fig:numericalexp}, at stage \(k\), the strip \(\mathcal G^{(k)}\) is partitioned into cells
$
\{C_i^{(k)}\}_{i\in \mathcal C_k},
$
where the index set \(\mathcal C_k\) depends on the dilution stage. A block \(B_j^{(k)}\) is defined as the union of two consecutive cells,
$
B_j^{(k)}
=
C_{2j}^{(k)} \cup C_{2j+1}^{(k)},
\; j\in \mathcal B_k.
$
Locally, at each dilution step, each block \(B_j^{(k)}\) is mapped to a cell \(C_j^{(k+1)}\). The experiment is therefore performed on a single stage-\(k\) block \(B_j^{(k)}\).
For a fixed input block weight \(n_B^{(k)}\), we sample many error configurations and compute the weighted average output weight
$
\bar n_C^{(k+1)}(w_{\mathrm{corr}},w_{\mathrm{wrong}})
$
after decimation and renormalization, where \(w_{\mathrm{corr}}\) and \(w_{\mathrm{wrong}}\) are the weights assigned to samples with correct and wrong decimations. The goal is to test whether wrong decimations can make \(\bar n_C^{(k+1)}(w_{\mathrm{corr}},w_{\mathrm{wrong}})\) much larger than the input weight \(n_B^{(k)}\). As shown in the right of Fig.~\ref{fig:numericalexp}, for each \(k\), even under a strong bias toward wrong decimations, \(w_c:w_w=1:100\), the average output weight remains close to the input weight. This supports the intuition that wrong decimations do not cause uncontrolled growth of the effective error weight.
\vspace{-3mm}
\begin{figure}[H]
\centering
\begin{minipage}{0.48\linewidth}
    \centering
    \includegraphics[width=\linewidth]{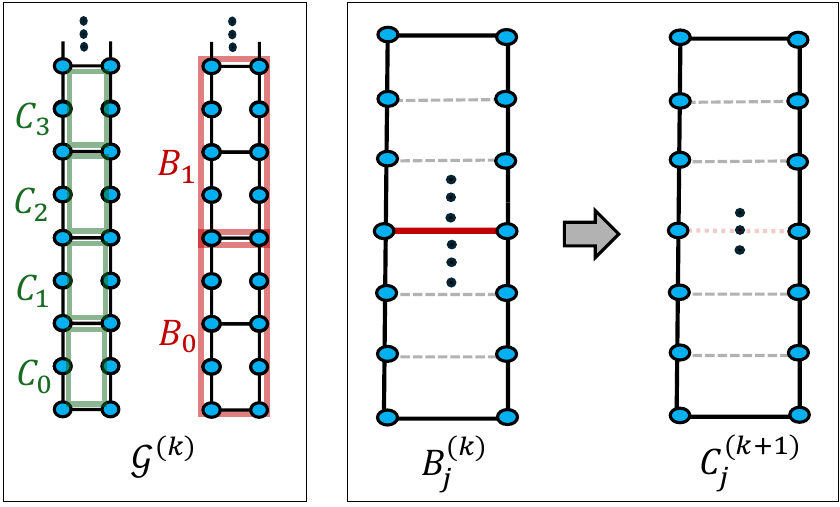}
\end{minipage}
\hspace{2mm}
\begin{minipage}{0.48\linewidth}
    \centering
    \begin{tikzpicture}
\definecolor{darkgray176}{RGB}{176,176,176}
\definecolor{lightgray204}{RGB}{204,204,204}
\definecolor{color1}{rgb}{0.0000,0.4470,0.7410}
\definecolor{color2}{rgb}{0.8500,0.3250,0.0980}
\definecolor{color3}{rgb}{0.9290,0.6940,0.1250}
\definecolor{color4}{rgb}{0.4940,0.1840,0.5560}
\definecolor{color5}{rgb}{0.4660,0.6740,0.1880}
\definecolor{color6}{rgb}{0.3010,0.7450,0.9330}
\definecolor{color7}{rgb}{0.6350,0.0780,0.1840}
\definecolor{color8}{rgb}{0.8350,0.0880,0.1840}

\begin{axis}[
name=mainplot,
title={Weight ratio $w_{\mathrm{corr}}:w_{\mathrm{wrong}} = 1:1$},
title style={font=\footnotesize, yshift=-2mm},
width=\linewidth,
height=0.7\linewidth,
legend cell align={left},
legend columns=1,
legend style={
  fill opacity=0.7,
  draw opacity=1,
  text opacity=1,
  at={(1,0.0)},
  anchor=south east,
  draw=lightgray204,
  font=\scriptsize
},
grid=both,
ticklabel style={font=\footnotesize},
label style={font=\footnotesize},
x grid style={darkgray176},
y grid style={darkgray176},
xmin=1,   xmax=15,
ymin=0.5, ymax=14,
xtick={1, 3, 5, 7, 9, 11, 13, 15},
xlabel={Input block error weight $n_{B}^{(k)}$},
xlabel style={
  at={(axis description cs: 0.5, 0.03)},
  anchor=north
},
ytick={1, 5, 9, 13},
ylabel={Average output weight $\bar{n}_C^{(k+1)}$},
  ylabel style={
    at={(axis description cs: 0.1, 0.5)},
    anchor=south,
  },
]

\addplot[thick, color1, mark=o, mark options={solid, scale=1.2}, smooth]
coordinates {
    (1, 0.8592)
    (2, 1.9063)
    (3, 2.408)
};
\addlegendentry{$k=0$}

\addplot[thick, color2, mark=o, mark options={solid, scale=1.2}, smooth]
coordinates {
    (2, 1.82)
    (3, 2.95)
    (4, 3.62)
    (5, 4.13)
};
\addlegendentry{$k=1$}

\addplot[thick, color4, mark=o, mark options={solid, scale=1.2}, smooth]
coordinates {
    (2, 1.89)
    (3, 2.84)
    (4, 3.789)
    (5, 4.81)
    (6, 5.88)
    (7, 6.748)
    (8, 7.27)
};
\addlegendentry{$k=2$}

\addplot[thick, solid, color3, mark=o, mark options={solid, scale=1.2}, smooth]
coordinates {
    (4,  3.88)
    (6,  5.83)
    (7,  6.80)
    (9,  8.75)
    (11, 10.757)
    (13, 12.80)
    (14, 13.67)
};
\addlegendentry{$k=3$}

y = x
\addplot[
    black,
    dashed,
    domain=0.0:20,
    samples=1000
] {x};
\addlegendentry{$y=x$}

\end{axis}

\begin{axis}[
width=0.5\linewidth,
height=0.38\linewidth,
at={(mainplot.north west)},
anchor=north west,
xshift=0.39cm,
yshift=-0.50cm,
xmin=1, xmax=15,
ymin=0.5, ymax=14,
xtick={1,5,9,13},
ytick={1,5,9,13},
ticklabel style={font=\tiny},
label style={font=\tiny},
title={$w_{\mathrm{corr}}:w_{\mathrm{wrong}} = 1:100$},
title style={font=\scriptsize, yshift=-1.5mm},
grid=both,
x grid style={darkgray176},
y grid style={darkgray176},
legend style={draw=none, fill=none},
]
\addplot[thick, color1, mark=o, mark options={solid, scale=0.5}, smooth]
coordinates {
    (1, 0.8592)
    (2, 1.98)
    (3, 1.79)
};

\addplot[thick, color2, mark=o, mark options={solid, scale=0.5}, smooth]
coordinates {
    (2, 1.82)
    (3, 3.43)
    (4, 3.36)
    (5, 3.56)
};

\addplot[thick, color4, mark=o, mark options={solid, scale=0.5}, smooth]
coordinates {
    (2, 1.89)
    (3, 2.84)
    (4, 3.789)
    (5, 4.81)
    (6, 5.88)
    (7, 6.748)
    (8, 7.27)
};

\addplot[thick, solid, color3, mark=o, mark options={solid, scale=0.5}, smooth]
coordinates {
    (4,  3.88)
    (6,  5.83)
    (7,  6.80)
    (9,  8.95)
    (11, 12.19)
    (13, 13.33)
    (14, 13.61)
};

y = x
\addplot[
    black,
    dashed,
    domain=0.0:20,
    samples=1000
] {x};
\end{axis}

\end{tikzpicture}
\end{minipage}
\vspace{-2mm}
\caption{Numerical experiment for testing the evolution of effective error weight under dilution. The experiment is performed locally on one block $B^{(k)}$ at different stages $k$. Left: Illustration of cells and blocks, green regions denote cells, while red regions denote blocks. Each block consists of two neighboring cells. Middle: A stage-$k$ block $B_j^{(k)}$, formed by two neighboring cells, is mapped to a stage-$(k+1)$ cell $C_j^{(k+1)}$. The red edge denotes the qubit that is decimated when passing to stage $k+1$. Right: Weighted average output error weight $\bar n_C^{(k+1)}$ as a function of the input block error weight $n_B^{(k)}$. Each color represents a different dilution stage $k$. The main panel uses equal weights, $w_{\mathrm{corr}}:w_{\mathrm{wrong}}=1:1$, while the inset uses $1:100$, strongly biasing the average toward wrong decimations. The dashed line is $y=x$ as a reference.}
\label{fig:numericalexp}
\end{figure}
\vspace{-3mm}
\noindent
\noindent
The numerical experiment suggests that the block-to-cell renormalization is
non-expansive at the level of local effective weight. Although this experiment
is performed locally on a single block, we use it as evidence for the following heuristic assumption on the true dilution dynamics: for sufficiently large \(d\),
\begin{equation}
    \sum_{i\in \mathcal C_{k+1}}
    n^{(k+1)}_{C_i}
    \leq
    \sum_{j\in \mathcal B_k}
    n^{(k)}_{B_j},
    \label{eq:heuristic-local-nonincreasing}
\end{equation}
where $n^{(k)}_{B_j}=|\underline e^{(k)}|_{B_j}$ and $n^{(k)}_{C_i}=|\underline e^{(k)}|_{C_i}$ denote the effective error weight restricted to block \(B_j^{(k)}\) and cell \(C_i^{(k)}\), respectively.
Since the Hamming weight is bounded by any covering count, and by the covering relation between blocks and cells at the same stage, we have
$
    n^{(k)}
    =
    |\underline e^{(k)}|
    \leq
    \sum_{j\in\mathcal B_k} n^{(k)}_{B_j}
    \leq 
    \sum_{i\in\mathcal C_k} n^{(k)}_{C_i}.
$
Using the heuristic non-expansion assumption in
Eq.~\eqref{eq:heuristic-local-nonincreasing} recursively gives
\begin{equation}
\begin{aligned}
    n^{(k)}
    \leq
    \sum_{i\in\mathcal C_k} n^{(k)}_{C_i}
    \leq
    \sum_{j\in\mathcal B_{k-1}} n^{(k-1)}_{B_j}
    \leq
    \sum_{i\in\mathcal C_{k-1}} n^{(k-1)}_{C_i}
    \leq
    \cdots
    \leq
    \sum_{j=1}^{d/2} n^{(0)}_{B_j}.
\end{aligned}
\label{eq:heuristic-weight-bound}
\end{equation}
Thus, the stage-wise condition is satisfied if the initial block-counting
weight obeys
$
    \sum_{j=1}^{d/2} n^{(0)}_{B_j} < \frac d2 .
$
Since each physical error is counted in at most two initial blocks, we have
$
    \sum_{j=1}^{d/2} n^{(0)}_{B_j}
    \leq
    2|\underline e^{(0)}|.
$
Therefore, a sufficient condition in terms of the initial physical error
weight is
$
    |\underline e^{(0)}| < \frac d4 .
$

\section{Discussion and Conclusion}
In summary, this work takes a step toward algorithmically scalable and interpretable MP-based decoding approaches for quantum error correction, seeking to reconcile the tension between practice and theory.
Looking forward, a direction is to generalize the dilution method for broader families of quantum LDPC codes under a more realistic noise model. Ultimately, we hope to find a framework that enables message-passing decoding to work reliably for quantum LDPC codes, meaning that the method is robust and scalable, and its behavior can largely be understood and predicted by the message-passing theory.

\section*{Acknowledgments}
The authors would like to thank Di Wu and Eduardo Mucciolo for their insightful discussions.


\newpage
\bibliographystyle{IEEEtran}
\bibliography{refs}

\newpage
\appendix
\section{Proof of Theorem~\ref{tm:distance}}
\label{ap:1}

We use two different local decompositions for the two sparsification patterns, as shown in the Fig.~\ref{fig:CD}. For the diagonal pattern \(D_H^s\), \(D_i\) denotes a local vertical strip; its interior contains the vertical edges of the strip, and \(\partial D_i\) denotes the retained horizontal edges on its boundary.  These regions form an overlapping cover. For the Cartesian pattern \(C_H^s\), \(C_i\) denotes a
horizontal band extending across the lattice; its interior contains the
vertical edges in the band, and \(\partial C_i\) consists of the two retained
horizontal boundary lines. These regions give a disjoint decomposition of the
diluted lattice.

\begin{figure}[H]
    \centering
    \scalebox{1}{
    \includegraphics[width=1\textwidth]{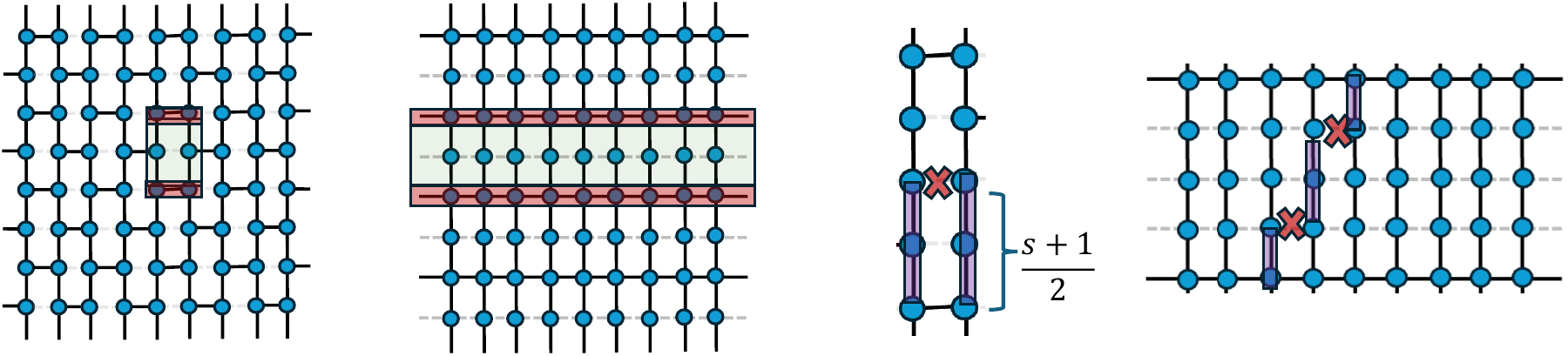} 
    }
    \vspace{-5 mm}
    \caption{Local regions and worst-case configurations used in Lemma~\ref{lm:interiorboundary}.
The first two panels define the local decompositions. For the diagonal pattern,
\(D_i\) is the green local region and \(\partial D_i\) consists of the red
retained boundary edges. For the Cartesian pattern, \(C_i\) is the green
horizontal band and \(\partial C_i\) consists of the two red retained boundary
lines. The last two panels illustrate worst-case local configurations. In the
diagonal region, an error placed midway between retained boundaries produces
the largest local expansion, giving the factor \(2\lfloor (s+1)/2\rfloor\). In the Cartesian band, the worst-case imbalance between the effective correction and the competing logical representative is achieved by placing the errors midway between the two retained boundary lines, giving the factor \(\lfloor (s-1)/2\rfloor\).}
    \vspace{-2 mm}
    \label{fig:CD}
\end{figure}

\begin{lemma}[Local expansion bounds]
\label{lm:interiorboundary}
Fix a sparsification ratio \(s\geq 1\). For the diagonal pattern
\(D_H^s\), let \(D_i\) denote the local vertical region shown in
Fig.~\ref{fig:CD}: its interior consists of the vertical edges
inside the region, while \(\partial D_i\) consists of the retained horizontal
boundary edges. For the Cartesian pattern \(C_H^s\), let \(C_i\) denote the
horizontal band shown in Fig.~\ref{fig:CD}: its interior consists
of the vertical edges in the band, while \(\partial C_i\) consists of the two
retained horizontal boundary lines.
Let \(\underline e\) be a nonzero error configuration restricted to one such
region, and let \(\tilde{\underline e}\) be the corresponding minimum-weight effective error supported on the diluted lattice. Let \(\tilde{\underline \ell}(\underline e)\) denote the minimum-weight effective representative in the opposite logical coset. Then the following local bounds hold:
\begin{equation}
    \max_{i,\underline e}
    \frac{
        \big|\tilde{\underline e}\big|_{D_i}
    }{
        \big|\underline e\big|_{D_i}
    }
    =
    \max_{\underline e}
    \frac{
        \big|\tilde{\underline e}\big|_{D}
    }{
        \big|\underline e\big|_{D}
    }
    =
    2\left\lfloor \frac{s+1}{2}\right\rfloor ,
    \label{eq:local-D-bound}
\end{equation}
and
\begin{equation}
    \max_{i,\underline e}
    \frac{
        \big|\tilde{\underline e}\big|_{C_i}
        -
        \big|\tilde{\underline \ell}(\underline e)\big|_{C_i}
    }{
        \big|\underline e\big|_{C_i}
    }
    =
    \max_{\underline e}
    \frac{
        \big|\tilde{\underline e}\big|_{C}
        -
        \big|\tilde{\underline \ell}(\underline e)\big|_{C}
    }{
        \big|\underline e\big|_{C}
    }
    =
    \left\lfloor \frac{s-1}{2}\right\rfloor .
    \label{eq:local-C-bound}
\end{equation}
Here, the maximization is over all nonzero local error configurations, so the
denominators are nonzero.
\end{lemma}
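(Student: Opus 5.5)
The plan is to reduce each bound to a one-dimensional path-length computation inside a single region, using translation invariance of the sparsification pattern. This invariance identifies every region $D_i$, resp.\ $C_i$, with a fixed model region $D$, resp.\ $C$. Boundary regions near the rough and smooth surface-code boundaries are truncated copies of the model region, and truncation can only shorten paths. So the first equality in each of \eqref{eq:local-D-bound} and \eqref{eq:local-C-bound} follows once we check that the maximizing configuration fits inside a bulk region. The remaining work is the two model computations. In each, we give an explicit worst-case configuration for the lower bound and a matching upper bound.

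\textbf{Diagonal case.} We first handle a single physical error on an interior (deleted) vertical edge of $D$. Its syndrome is a pair of adjacent $Z$-checks. On $\mathcal{G}^s_{D_H}$ the interior vertices of $D$ connect to the retained boundary edges $\partial D$ only through paths whose length equals the graph distance to the nearest retained diagonal. With retained diagonals spaced $s+1$ apart, any interior position lies within $\lfloor (s+1)/2\rfloor$ steps of a retained diagonal. The cheapest diluted string joining the two defects therefore has length at most $2\lfloor (s+1)/2\rfloor$. Equality is attained when the error sits midway between retained boundaries, as in Fig.~\ref{fig:CD}.

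For the general upper bound with $|\underline e|_D = w$, we reroute each error edge independently. Concatenating the rerouted strings gives a valid diluted representative of weight at most $2\lfloor (s+1)/2\rfloor\, w$, and the minimum-weight $\tilde{\underline e}$ can only be lighter. This establishes \eqref{eq:local-D-bound}.

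\textbf{Cartesian case.} Here $C$ is a band of height $s+1$ bounded by two retained horizontal lines, and only the vertical edges inside the band are deleted. A vertical error edge at height $h$ in the band, with $1\le h\le s$ measured from the lower boundary, creates defects that must be matched through the band. The constraint is to use retained edges, i.e.\ the vertical segments crossing the band that are retained, plus the boundary lines. Two diluted representatives are available, differing by a vertical segment, and they lie in opposite cosets locally. We compare the minimum-weight effective error $\tilde{\underline e}$ with the best representative $\tilde{\underline\ell}$ in the other coset.

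The difference $|\tilde{\underline e}|_C-|\tilde{\underline\ell}|_C$ is then a function of $h$ alone, piecewise linear in the distances $h$ and $s+1-h$. The largest imbalance is reached at $h$ near the middle, and we expect it to evaluate to $\lfloor (s-1)/2\rfloor$ per error. For $w$ errors, superposing single-error representatives gives additivity of the difference in the upper direction. The midpoint configuration achieves the bound, which gives \eqref{eq:local-C-bound}.

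The main obstacle is the multi-error upper bound. The minimum-weight $\tilde{\underline e}$ of a superposition need not be the superposition of the individual minimizers: errors may pair up and cancel routing cost. In the diagonal case this only helps, since subadditivity suffices. In the Cartesian case, however, a difference of two minima appears, so subadditivity of $|\tilde{\underline e}|$ must be paired with a lower bound on $|\tilde{\underline\ell}|$. I would try to resolve this with a cut argument: any representative in the opposite coset must cross each of the two boundary lines of the band an odd number of times. This lower-bounds $|\tilde{\underline\ell}|_C$ by $|\tilde{\underline e}|_C$ minus the per-error midpoint excess, so the ratio bound follows linearly.
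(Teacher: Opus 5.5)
Your diagonal case follows the paper's own route: an error midway between retained rungs, a detour to the nearest retained rung, then subadditivity. The Cartesian case, however, is not proved. The per-error value $\lfloor (s-1)/2\rfloor$ is the whole content of \eqref{eq:local-C-bound}, and you only say you ``expect'' it. The comparison you set up cannot produce it.

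\emph{The two representatives you compare lie in the same coset.} Two local detours around one error differ by the boundary of a rectangle of faces inside the band, i.e.\ by a product of stabilizers. So they are in the \emph{same} coset, not opposite ones. A representative of the opposite coset differs from $\tilde{\underline e}$ by a rough-to-rough logical string, which is a global object. Hence $|\tilde{\underline e}|_C-|\tilde{\underline\ell}(\underline e)|_C$ is not a function of $h$ alone until you say how the minimum-weight $\tilde{\underline\ell}$ passes through $C_i$.

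\emph{The band geometry is inverted.} In $C_H^s$ the vertical edges inside the band are kept, and the horizontal edges of the $s$ interior rows are deleted. The same inversion appears in your diagonal case, where it happens not to change the count. So a vertical error edge needs no rerouting; the configuration that expands is a horizontal edge at height $h$.

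\emph{A single midpoint error cannot give the stated value.} Take a bulk error at height $h$. Every representative, in either coset, must spend at least $\min(h,s+1-h)$ interior edges on each of the two defects just to reach a retained line, and $\tilde{\underline e}$ spends exactly that. So the single-error difference is never positive, and for $s\ge 3$ the value $\lfloor (s-1)/2\rfloor$ cannot come from the midpoint computation you anticipate. The paper's sketch is also thinnest at exactly this step.

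\emph{The multi-error step is open as well.} You are right that subadditivity, which the paper simply invokes, does not control a difference of two minima. But your cut argument uses the wrong invariant. What is fixed is the parity of $\tilde{\underline e}\oplus\tilde{\underline\ell}(\underline e)$ across a cut transverse to the logical operator, not the parity of $\tilde{\underline\ell}$ across the band's boundary lines. In the proof of Theorem~\ref{tm:distance} the logical difference runs \emph{along} the retained lines; that is how the boundary term reaches $d$. So crossings of those lines carry no coset information. In any case, a global crossing count does not bound the restriction $|\tilde{\underline\ell}|_{C_i}$.

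\emph{Two smaller points in the diagonal case.}
\begin{enumerate}
\item A single detour has $2\lfloor (s+1)/2\rfloor$ interior edges \emph{plus} one retained boundary edge, while the numerator in \eqref{eq:local-D-bound} is the restricted weight $|\tilde{\underline e}|_{D_i}$. Your step ``the minimum-weight $\tilde{\underline e}$ can only be lighter'' concerns total weight. You still need to rule out the minimizer trading boundary edges for interior ones.
\item ``Truncation can only shorten paths'' fails near a smooth boundary. A truncated region there may lack the retained rung or line on its near side, which forces a detour with up to $2s$ interior edges. Either control the alignment of the pattern with the boundary, or treat those regions separately.
\end{enumerate}
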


\begin{proof}
The claim follows from the local geometries shown in
Fig.~\ref{fig:CD}. For the diagonal region \(D_i\), the worst case
is obtained by placing a local error midway between two retained boundary
edges. Its effective representative can extend to the nearest retained
boundaries on both sides, giving at most \(2\lfloor (s+1)/2\rfloor\) effective edges per local error. By subadditivity of Hamming weight, the same bound holds for any local error configuration.
For the Cartesian region \(C_i\), the interior contribution compares the
effective representative with the competing representative in the opposite
logical coset. The largest possible difference occurs when the local errors
are placed midway between the two retained boundary lines, as illustrated in
Fig.~\ref{fig:CD}. In this case, the correction returning to the
same boundary line and the competing path crossing to the other boundary line
differ by at most \(\lfloor (s-1)/2\rfloor\) per local error. Subadditivity
again extends the bound to arbitrary local configurations.
\end{proof}

\begin{proof}
    Let $\underline{e}$ be an error configuration on original lattice $\mathcal{G}$ and  $\tilde{\underline{e}}$ be its effective error (see Definition~\ref{def:effective}) supported on diluted lattice $\mathcal{G}^s$, and let $\tilde{\underline{\ell}}(\underline{e})$ denote the minimum-weight effective operator that induces a logical failure (i.e., the minimum-weight representative in the corresponding logical coset). For any error $\underline{e}$ on the original lattice $\mathcal{G}$ with $|\underline{e}| \leq \lfloor (d-1)/2 \rfloor$, it is correctable on a given diluted lattice $\mathcal{G}^s$ if
    \begin{equation}
          \Delta_{\mathcal{G}^s}(\underline{e}) = |\tilde{\underline{e}}|- |\tilde{\underline{\ell}}(\underline{e}) | <0.
          \label{eq:c1}
    \end{equation}
     Use the fact that the Hamming weight is subadditive under a cover and that $\bigcup_i \partial D_i \subset \mathcal{G}_{D_H}^s$. For $s$-diluted lattice via Diagonal dilution $\mathcal{G}^s_{D_H}$,  we obtain
    \begin{equation}
        \big|\tilde{\underline{e}}\big|
        =
        \frac{1}{2} \cdot \sum_i \big|\tilde{\underline{e}}\big|_{D_i}
        +
        \sum_i \big|\tilde{\underline{e}}\big|_{\partial D_i},
        \quad
        \big|\tilde{\underline{\ell}}(\underline{e})\big| 
        \geq
        \sum_i \big|\tilde{\underline{\ell}}(\underline{e})\big|_{\partial D_i}
        \label{eq:subadd-cover}
    \end{equation}
    From \eqref{eq:subadd-cover}, it follows that for the $D_H^s$-diluted lattice $\mathcal{G}^s_{D_H}$,
    \begin{equation}
        \Delta_{\mathcal{G}^s_{D_H}}(\underline{e}) = 
        \frac{1}{2}\cdot \sum_i \big|\tilde{\underline{e}}\big|_{D_i}
        +
        \sum_i \big|\tilde{\underline{e}}\big|_{\partial D_i} 
        -
         \sum_i \big|\tilde{\underline{\ell}}(\underline{e})\big|_{\partial D_i}
         =
         \underbrace{\frac{1}{2}\cdot \sum_i \big|\tilde{\underline{e}}\big|_{D_i}}_{\Delta|_{D_H}} + \underbrace{\sum_i \biggr( \big|\tilde{\underline{e}}\big|_{\partial D_i} 
        -
          \big|\tilde{\underline{\ell}}(\underline{e})\big|_{\partial D_i} \biggr)}_{\Delta|_{\partial D_H}}
    \end{equation}
    \noindent
    For an $C_H^s$-diluted lattice $\mathcal{G}_{C_H}^s$, since $\{C_i\}_i$ and $\{\partial C_i\}_i$ together form a partition of the lattice (i.e., they cover the lattice and are pairwise disjoint), the Hamming weight decomposes exactly as
    \begin{equation}
        \big|\tilde{\underline{e}}\big|
        =
        \sum_i \big|\tilde{\underline{e}}\big|_{C_i}
        +
        \sum_i \big|\tilde{\underline{e}}\big|_{\partial C_i},
        \qquad
        \big|\tilde{\underline{\ell}}(\underline{e})\big|
        =
        \sum_i \big|\tilde{\underline{\ell}}(\underline{e})\big|_{C_i}
        +
        \sum_i \big|\tilde{\underline{\ell}}(\underline{e})\big|_{\partial C_i},
    \end{equation}
    Hence, for $\mathcal{G}_{C_H}^s$, we have
    \begin{equation}
        \Delta_{\mathcal{G}^s_{C_H}}(\underline{e}) =  \sum_i \big|\tilde{\underline{e}}\big|_{C_i} 
    - 
    \sum_i \big|\tilde{\underline{\ell}}(\underline{e})\big|_{C_i}
    + 
    \sum_i \big|\tilde{\underline{e}}\big|_{\partial C_i}
    - 
    \sum_i \big|\tilde{\underline{\ell}}(\underline{e})\big|_{\partial C_i}
    =
    \underbrace{\sum_{i} \biggr( \big|\tilde{\underline{e}}\big|_{C_i} - \big|\tilde{\underline{\ell}}(\underline{e})\big|_{C_i} \biggr) }_{\Delta|_{C_H}}
    +
    \underbrace{\sum_i \biggr(\big|\tilde{\underline{e}}\big|_{\partial C_i}
    - 
    \big|\tilde{\underline{\ell}}(\underline{e})\big|_{\partial C_i} \biggr)}_{\Delta|_{\partial C_H}}
    \end{equation}
     We refer to the first term $\Delta|_{D}, \Delta|_{C}$ as the \emph{interior term} and the second term $\Delta|_{\partial D}, \Delta|_{\partial C}$ as the \emph{boundary term}. 
    \noindent
    For the interior term $\Delta|_{D}$ and $\Delta|_{C}$, we have
    \begin{equation}
        \Delta|_{D_H} = \frac{1}{2}\cdot \sum_i \big|\tilde{\underline{e}}\big|_{D_i} 
        =
        \frac{1}{2} \cdot \sum_{i} \biggr( \frac{\big|\tilde{\underline{e}}\big|_{D_i}}
        { \big|\underline{e} \big|_{D_i}} \cdot \big|\underline{e} \big|_{D_i}\biggr)
        \leq 
         \biggr(\max_{i, \, \underline{e}}  \frac{\big|\tilde{\underline{e}}\big|_{D_i} }
    { \big|\underline{e} \big|_{D_i}} \biggr) \cdot \biggr(\frac{1}{2} \cdot \sum_{i} \big|\underline{e} \big|_{D_i}\biggr)
    \label{eq:interior_D}
    \end{equation}
    Similarly,
    \begin{equation}
        \Delta|_{C_H}=\sum_{i} \biggr( \big|\tilde{\underline{e}}\big|_{C_i} - \big|\tilde{\underline{\ell}}(\underline{e})\big|_{C_i} \biggr) 
        = 
        \sum_{i} \biggr( \frac{\big|\tilde{\underline{e}}\big|_{C_i} - \big|\tilde{\underline{\ell}}(\underline{e}) \big|_{C_i}}
        { \big|\underline{e} \big|_{C_i}} \cdot \big|\underline{e} \big|_{C_i}\biggr)
        \leq 
        \biggr(\max_{i, \, \underline{e}}  \frac{\big|\tilde{\underline{e}}\big|_{C_i} - \big|\tilde{\underline{\ell}}(\underline{e}) \big|_{C_i}}
        { \big|\underline{e} \big|_{C_i}}\biggr) 
        \cdot 
        \biggr( \sum_{i} \big|\underline{e} \big|_{C_i}\biggr)
        \label{eq:interior_C}
    \end{equation}
    Since all blocks $D_i$ (resp. $C_i$) are identical by construction, we have the following inequality by Lemma~\ref{lm:interiorboundary},
    \begin{equation}
        \max_{i, \, \underline{e}}  \frac{\big|\tilde{\underline{e}}\big|_{D_i} }
    { \big|\underline{e} \big|_{D_i}} = \max \frac{\big|\tilde{\underline{e}}\big|_{D_i} }
    { \big|\underline{e} \big|_{D_i}} = 2\cdot \biggr\lfloor \frac{s+1}{2} \biggr\rfloor.
    \end{equation}
    Similarly,
    \begin{equation}
        \max_{C_i, \, \underline{e}}  \frac{\Big|\tilde{\underline{e}}|_{C_i}\Big| - \Big|\tilde{\underline{\ell}}|_{C_i}(\underline{e}) \Big|}
    { \Big|\underline{e}|_{C_i} \Big|}
    = 
    \max  \frac{\Big|\tilde{\underline{e}}|_{C}\Big| - \Big|\tilde{\underline{\ell}}|_{C}(\underline{e}) \Big|}
    { \Big|\underline{e}|_{C} \Big|} = \biggr\lfloor \frac{s-1}{2} \biggr\rfloor. 
    \end{equation}
    Since $\frac{1}{2} \sum_i |\underline e|_{D_i} \le |\underline e|$ and $\sum_i |\underline e|_{C_i} \le |\underline e|$, the interior term are bounded by
    \begin{equation}
        \Delta|_{D_H}
        \le
         2\cdot \Big\lfloor \frac{s+1}{2} \Big\rfloor\,\cdot |\underline e|,
        \qquad
        \Delta|_{C_H}
        \le
        \Big\lfloor \frac{s-1}{2} \Big\rfloor\,\cdot |\underline e|.
    \end{equation}
    \noindent
    For the boundary term $\Delta|_{\partial D_H}$ and $\Delta|_{\partial C_H}$, since
    \begin{equation}
        \sum_{i} \big|\tilde{\underline{e}}\big|_{\partial D_i} + \sum_{i} \big|\tilde{\underline{\ell}}(\,\underline{e}\,) \big|_{\partial D_i} 
        \geq  
        \sum_{i} \big|\tilde{\underline{e}}|_{\partial D_i} \oplus \tilde{\underline{\ell}}|_{\partial D_i}(\,\underline{e}\,) \big| 
        \geq
        \sum_{i} \big|(\tilde{\underline{e}} \oplus \tilde{\underline{\ell}})|_{\partial D_i}(\,\underline{e}\,) \big| 
        \geq 
        d. 
    \end{equation}
    Similarly,
    \begin{equation}
        \sum_{i} \big|\tilde{\underline{e}}\big|_{\partial C_i} + \sum_{i} \big|\tilde{\underline{\ell}}(\,\underline{e}\,) \big|_{\partial C_i} 
        \geq  
        \sum_{i} \big|\tilde{\underline{e}}|_{\partial C_i} \oplus \tilde{\underline{\ell}}|_{\partial C_i}(\,\underline{e}\,) \big| 
        \geq
        \sum_{i} \big|(\tilde{\underline{e}} \oplus \tilde{\underline{\ell}})|_{\partial C_i}(\,\underline{e}\,) \big| 
        \geq 
        d. 
    \end{equation}
    Since $\sum_i \big|\tilde{\underline{e}}\big|_{\partial D_i} \leq  \big|\underline{e}\big|$ and $\sum_i \big|\tilde{\underline{e}}\big|_{\partial C_i} \leq  \big|\underline{e}\big|$, we have
    $$
    \Delta|_{\partial C_H} = \sum_i \biggr(\big|\tilde{\underline{e}}\big|_{\partial C_i}
    - 
    \big|\tilde{\underline{\ell}}(\underline{e})\big|_{\partial C_i} \biggr)
    \leq 
    2 \big|\underline{e} \big| -d,\quad
    \Delta|_{\partial D_H} = \sum_i \biggr(\big|\tilde{\underline{e}}\big|_{\partial D_i}
    - 
    \big|\tilde{\underline{\ell}}(\underline{e})\big|_{\partial D_i} \biggr)
    \leq 
    2 \big|\underline{e} \big| -d
    $$
    Therefore,
    $$
    \Delta(\mathcal{G}^s_{D_H}) = \Delta|_{D_H} + \Delta|_{\partial D_H}  \leq 2 \cdot  \biggr\lfloor \frac{s+1}{2} \biggr\rfloor \cdot \big|\underline{e} \big| + 2\big|\underline{e}\big| -d,
    \quad
    \Delta(\mathcal{G}^s_{C_H}) = \Delta|_{C_H} + \Delta|_{\partial C_H}  \leq \biggr\lfloor \frac{s-1}{2} \biggr\rfloor \cdot \big|\underline{e} \big| + 2\big|\underline{e}\big| -d.
    $$
    $$
    t(\mathcal{G}_D) \geq \Big\lfloor\frac{d-1}{2\cdot \lfloor\frac{s+1}{2}\rfloor + 2} \Big \rfloor, \quad 
    t(\mathcal{G}_{C_H}) \geq \biggr\lfloor\frac{d-1}{\lfloor\frac{s-1}{2}\rfloor + 2} \biggr \rfloor.
    $$
\end{proof}

\section{Proof of Proposition~\ref{prop:correla}}
\label{ap:2}
\begin{proof}
Let \((X,Z)\) denote the pair of random variables representing the \(X\)- and \(Z\)-error components of a single-qubit error. For the two variable nodes \(i\) and \(j\) in the mixed-type \(4\)-cycle, we write
\[
    (X_i,Z_i) \equiv (x_i,z_i),
    \qquad
    (X_j,Z_j) \equiv (x_j,z_j),
\]
for their associated random variables. Since the two qubits have the same single-qubit prior $\psi_i(x_i,z_i)=\psi_j(x_j, z_j) = p_{E_i}(x_i, z_i)$, we denote
\[
    p_{10}:=\psi_i(1,0),
    \qquad
    p_{11}:=\psi_i(1,1),
    \qquad
    p^Z:=\sum_{x\in\{0,1\}}\psi_i(x,1).
\]
After removing the \(Z\)-factor node \(a^Z\), the remaining \(X\)-factor imposes the constraint $Z_i\oplus Z_j=\sigma_{a^X}$. Thus the cavity distribution of \((X_i,X_j)\) is obtained by conditioning on either \(Z_i=Z_j\) or \(Z_i\neq Z_j\). First consider \(\sigma_{a^X}=0\), i.e. \(Z_i=Z_j\). We have
    \begin{equation}
        \mathbb{P}(X_i=1, X_j=1, Z_i=Z_j) 
        = 
        \sum_{i \in {(0,1)}} \mathbb{P}(X_1=1, Z_1=i) \cdot \mathbb{P}(X_2=1, Z_2=i) 
        =
        p_{10}^2 +p_{11}^2,
    \end{equation}
and
\begin{equation}
    \mathbb P(Z_i=Z_j)
    =
    (1-p^Z)^2+(p^Z)^2.
\end{equation}
Hence
\begin{equation}
    \mathbb P(X_i=1,X_j=1\mid Z_i=Z_j)
    =
    \frac{p_{10}^2+p_{11}^2}
    {(1-p^Z)^2+(p^Z)^2}.
\end{equation}
Similarly,
\begin{equation}
    \mathbb P(X_i=1,Z_i=Z_j)
    =
    p_{10}(1-p^Z)+p_{11}\cdot p^Z,
\end{equation}
and therefore
\begin{equation}
    \mathbb P(X_1=i\mid Z_i=Z_j)
    =
    \frac{p_{10}(1-p^Z)+p_{11}p^Z}
    {(1-p^Z)^2+(p^Z)^2}.
\end{equation}
By symmetry, this is also
\(\mathbb P(X_j=1\mid Z_i=Z_j)\). Thus
\begin{align}
    \operatorname{Cov}(X_i,X_j\mid \sigma_{a^X}=0)
    &=
    \mathbb P(X_i=1,X_j=1\mid Z_i=Z_j)
    -
    \mathbb P(X_1=1\mid Z_1=Z_2)^2 \notag \\
    &=
    \frac{p_{10}^2(p^Z)^2+p_{11}^2(1-p^Z)^2
    -2p_{10}p_{11}p^Z(1-p^Z)}
    {\big((1-p^Z)^2+(p^Z)^2\big)^2} \notag \\
    &=
    \frac{\big(p_{11}(1-p^Z)-p_{10}p^Z\big)^2}
    {\big((1-p^Z)^2+(p^Z)^2\big)^2}.
    \label{eq:cov-even}
\end{align}
Next consider \(\sigma_{a^X}=1\), i.e. \(Z_i\neq Z_j\). We have
\begin{equation}
    \mathbb P(X_i=1,X_j=1,Z_i\neq Z_j)
    =
    2p_{10}p_{11},
\end{equation}
and
\begin{equation}
    \mathbb P(Z_i\neq Z_j)
    =
    2p^Z(1-p^Z).
\end{equation}
Therefore
\begin{equation}
    \mathbb P(X_1=i,X_2=j\mid Z_i\neq Z_j)
    =
    \frac{p_{10}p_{11}}
    {p^Z(1-p^Z)}.
\end{equation}
Moreover,
\begin{equation}
    \mathbb P(X_i=1,Z_i\neq Z_j)
    =
    p_{10}p^Z+p_{11}(1-p^Z),
\end{equation}
so
\begin{equation}
    \mathbb P(X_i=1\mid Z_i\neq Z_j)
    =
    \frac{p_{10}p^Z+p_{11}(1-p^Z)}
    {2p^Z(1-p^Z)}.
\end{equation}
Again by symmetry, the same expression holds for \(X_2\). Hence
\begin{align}
    \operatorname{Cov}(X_i,X_j\mid \sigma_{a^X}=1)
    &=
    \mathbb P(X_i=1,X_j=1\mid Z_i\neq Z_j)
    -
    \mathbb P(X_i=1\mid Z_i\neq Z_j)^2 \notag \\
    &=
    -\frac{\big(p_{11}(1-p^Z)-p_{10}p^Z\big)^2}
    {\big(2p^Z(1-p^Z)\big)^2}.
    \label{eq:cov-odd}
\end{align}
\noindent
Finally, by definition of the $X$-$Z$ coupling $\kappa:=\psi_i(1,1)-\sum_{z_i}\psi_i(1,z_i) \cdot \sum_{x_i}\psi_i(x_i,1).$,
\begin{align}
    \kappa
    &=
    \operatorname{Cov}(X,Z) \notag \\
    &=
    \mathbb P(X=1,Z=1)-\mathbb P(X=1)\mathbb P(Z=1) \notag \\
    &=
    p_{11}-(p_{10}+p_{11})p^Z \notag
    =
    p_{11}(1-p^Z)-p_{10}p^Z.
\end{align}
Substituting this into Eqs.~\eqref{eq:cov-even} and \eqref{eq:cov-odd} gives
\begin{equation}
    \operatorname{Cov}(X_i,X_j\mid \sigma_{a^X}=0)
    =
    \frac{\kappa^2}
    {\big((1-p^Z)^2+(p^Z)^2\big)^2},
\end{equation}
and
\begin{equation}
    \operatorname{Cov}(X_i,X_j\mid \sigma_{a^X}=1)
    =
    -\frac{\kappa^2}
    {\big(2p^Z(1-p^Z)\big)^2}.
\end{equation}
Since the cavity discrepancy $\delta(a \mid \underline{\sigma})$ is defined using total variation distance, then for two binary variables with identical marginals,
\[
    \left\|
    \mu(x_1,x_2)-\mu(x_1)\mu(x_2)
    \right\|_{\mathrm{TV}}
    =
    2\left|\operatorname{Cov}(X_1,X_2)\right|.
\]
Therefore,
\begin{equation}
    \delta(a^Z\mid \sigma_{a^X}=0)
    =
    \frac{2\kappa^2}
    {\big((1-p^Z)^2+(p^Z)^2\big)^2},
\end{equation}
and
\begin{equation}
    \delta(a^Z\mid \sigma_{a^X}=1)
    =
    \frac{2\kappa^2}
    {\big(2p^Z(1-p^Z)\big)^2}.
\end{equation}
In particular, the cavity discrepancy vanishes if and only if
\(\kappa=0\), equivalently when the \(X\)- and \(Z\)-components of the single-qubit error prior are independent.
\end{proof}

\newpage
\end{document}